\documentclass[sigconf]{acmart}
\usepackage{popets}

\usepackage{bm,nicefrac}
\usepackage{cancel}
\usepackage{multirow}

\usepackage{framed,url,caption,subcaption,graphicx}
\usepackage{xcolor,soul, fontawesome,algpseudocode,algorithm,xspace,multirow,ctable,longtable,colortbl, lscape, pifont}
\usepackage{cleveref}
\usepackage{booktabs, siunitx,comment}

\definecolor{light-gray}{gray}{0.9}
\definecolor{darkgreen}{rgb}{0,0.5,0}
\definecolor{light-blue}{rgb}{0,.7,1}
\definecolor{red}{rgb}{.7, 0, 0}
\setlength{\arrayrulewidth}{0.2mm}
\setlength{\tabcolsep}{1.0pt}
\renewcommand{\arraystretch}{1.5}

\usepackage{enumitem}
\setlist{noitemsep, topsep=0pt, partopsep=0pt}
\graphicspath{{./figures/} }
\crefformat{section}{\S#2#1#3}
\crefformat{subsection}{\S#2#1#3}
\crefformat{subsubsection}{\S#2#1#3}


\expandafter\def\expandafter\UrlBreaks\expandafter{\UrlBreaks
  \do\a\do\b\do\c\do\d\do\e\do\f\do\g\do\h\do\i\do\j%
  \do\k\do\l\do\m\do\n\do\o\do\p\do\q\do\r\do\s\do\t%
  \do\u\do\v\do\w\do\x\do\y\do\z\do\A\do\B\do\C\do\D%
  \do\E\do\F\do\G\do\H\do\I\do\J\do\K\do\L\do\M\do\N%
  \do\O\do\P\do\Q\do\R\do\S\do\T\do\U\do\V\do\W\do\X%
  \do\Y\do\Z}

\newcommand{\F}{Figure }

\newcommand{\T}{Table }

\renewcommand{\S}{Section }
\newcommand{\A}{Alg.}
\newcount\Comments  \Comments=1  
\newcommand{\kibitz}[2]{\ifnum\Comments=1\textcolor{#1}{#2}\fi}
\newcommand{\STOP}[1]  {\kibitz{red}   {[\textbf{STOP HERE} - compiled: \today]}}

\newcommand{\R}{\mathbb{R}}


\newcommand{\paratitle}[1]{\vspace{.05in} \noindent\textbf{#1}}

%
%





\usepackage{bbm}
\usepackage{diagbox}
\usepackage{setspace}
\usepackage{hyperref}
\usepackage[flushleft]{threeparttable}
\setlist[enumerate]{wide=\parindent}

\def\ToolX{\textsc{De-Harpo}\xspace}
\def\Obfuscator{\textit{obfuscator}\xspace}

\def\Denoiser{\textit{denoiser}\xspace}

\setcopyright{popets}
\copyrightyear{YYYY}

\acmYear{YYYY}
\acmVolume{YYYY}
\acmNumber{X}
\acmDOI{XXXXXXX.XXXXXXX}
\acmISBN{}
\acmConference{Proceedings on Privacy Enhancing Technologies}
\settopmatter{printacmref=false,printccs=false,printfolios=true}

\begin{document}

\title{A Utility-Preserving Obfuscation Approach for YouTube Recommendations}


\author{
Jiang Zhang$^1$ {} {} {} {} 
Hadi Askari$^2$ {} {} {} {} 
Konstantinos Psounis$^1$ {} {} {} {} 
Zubair Shafiq$^2$}
\affiliation {
$^1$University of Southern California {} {} {} {} 
$^2$University of California, Davis \country{}}

\email{
{jiangzha, kpsounis}@usc.edu,
{haskari, zubair}@ucdavis.edu
}





\renewcommand{\shortauthors}{Zhang et al.}

\begin{abstract}
Online content platforms optimize engagement by providing personalized recommendations to their users.
These recommendation systems track and profile users to predict relevant content a user is likely interested in. 
While the personalized recommendations provide utility to users, the tracking and profiling that enables them poses a privacy issue.
%
There is increasing interest in building privacy-enhancing obfuscation approaches that do not rely on cooperation from online content platforms.
However, existing obfuscation approaches primarily focus on enhancing privacy but at the same time they degrade the utility because obfuscation introduces unrelated recommendations.
We design and implement \ToolX, an obfuscation approach for YouTube's recommendation system that not only \textit{obfuscates} a user's video watch history to protect privacy but then also \textit{denoises} the video recommendations by YouTube to preserve their utility. 
In contrast to prior obfuscation approaches, \ToolX adds a \Denoiser that makes use of a ``secret'' input (i.e., a user's actual watch history) as well as information that is also available to the adversarial recommendation system (i.e., obfuscated watch history and corresponding ``noisy" recommendations).
Our large-scale evaluation of \ToolX shows that it outperforms the state-of-the-art by a factor of 2$\times$ in terms of preserving utility for the same level of privacy, while maintaining stealthiness and robustness to de-obfuscation.
\end{abstract}

\keywords{privacy, utility, obfuscator, denoiser}

\maketitle

\vspace{-.07in}
\section{Introduction}
Online content platforms, such as YouTube, heavily rely on  recommendation systems to optimize user engagement on their platforms. 
For instance, 70\% of the content watched on YouTube is recommended by its algorithm \cite{rodriguez2018ytrecommendationsQZ}.
These recommendation systems provide personalized content recommendations by tracking and profiling user activity. 
For instance, YouTube tracks and profiles activities of its users on YouTube as well as off of YouTube to this end \cite{URL_YOUTUBE_SUPPORT}.
This tracking and profiling enables these platforms to predict relevant content that a user is likely to be interested in. 
On one hand, this tracking and profiling enables desirable utility to users by providing relevant content recommendations.
On the other hand, this tracking and profiling poses a privacy issue because the platform might infer potentially sensitive user interests.

Some platforms, including YouTube, allow users to remove a subset of the tracked activity (e.g., remove a specific video from YouTube watch history) or even disable the use of certain profiled user interests (e.g., gambling) to influence the recommendations. 
However, these controls do not necessarily stop the platform from
tracking and profiling user activities in the first place.
Thus, they may not provide much, if any, privacy benefit to users.
Moreover, the exercising of these controls would hurt the quality of personalized recommendations.
For example, if users employ these controls to curtail tracking or profiling then they will likely not receive personalized recommendations they are actually interested in.

The research community is increasingly interested in developing privacy-enhancing obfuscation approaches that do not rely on cooperation from online content platforms \cite{howe2017engineering,nissenbaum2009trackmenot,degeling2018tracking,zhang2021harpo}. 
%
%
%
%
At a high level, these privacy-enhancing approaches work by adding fake activity to real user activity to lessen the ability of the recommendation system to infer sensitive information.
However, the addition of fake activity for the sake of obfuscation also ends up impacting the utility users might derive from the recommendation system in terms of relevance of personalized recommendations. 
Prior obfuscation approaches attempt to navigate the trade-off between privacy and utility, for example \cite{zhang2021harpo}, by carefully adding fake activity so as to obfuscate ``private'' interests but allow ``non-private'' interests.

In this work, we are interested in designing a privacy-enhancing \textit{and} utility-preserving obfuscation approach for recommendation systems. 
In contrast to prior approaches that are typically limited to only obfuscating inputs to the recommendation system, our key idea is to design an obfuscation approach that can obfuscate inputs to preserve user privacy but at the same time remove ``noise'' from outputs to preserve the utility of recommendations. 
Since an adversarial recommendation system might also attempt to remove ``noise'', it is crucial that the denoiser can only be used by the user and not by the recommendation system. 
To this end, our insight is that the denoiser uses a ``secret'' input (specifically, a user's actual browsing history), which is only available to the user and not the recommendation system.  
The recommendation system instead only has access to the obfuscated browsing history of the user. 
Therefore, by leveraging the knowledge of a user's actual browsing history, the denoiser allows the user to preserve the recommendations related to the users' actual interests while discarding the unrelated recommendations caused by obfuscation.


We design and implement \ToolX, an obfuscation approach for YouTube's recommendation system that not only obfuscates a user's video watch history to protect privacy but then also denoises the video recommendations by YouTube to preserve their utility. 
\ToolX uses an \Obfuscator to inject obfuscation videos into a user's video watch history and a \Denoiser to remove recommended videos that are unrelated to the user's actual interests. 

The \Obfuscator is a RL model trained to insert YouTube videos in a users' watch history that will maximize the distortion in their interests being inferred by YouTube. 
We address three key issues in designing \ToolX's \Obfuscator, which is a non-trivial adaptation of Harpo \cite{zhang2021harpo} to YouTube. 
First, we build a surrogate of YouTube's recommendation system to efficiently train the RL model in a virtual environment. 
Second, we design the surrogate model to predict the distribution of hundreds of different classes of YouTube recommendation videos (we use the 154 affinity segments used by Google \cite{URL_GOOGLEADS} as our video classes) rather than the sheer number (order of hundreds of millions) of individual YouTube videos. Lastly, the \Obfuscator selects obfuscation videos based on embedding similarity, which is scalable to millions of obfuscation videos.

The \Denoiser is a ML model that is trained to reproduce the original recommendations that would have been received in the absence of the \Obfuscator. 
We address two key issues in designing \ToolX's \Denoiser.
First, \Denoiser makes use of a ``secret'' input (i.e., a user's actual watch history) as well as information that is also available to the adversarial recommendation system (i.e., obfuscated watch history and corresponding ``noisy" recommendations).
As we show later, this design ensures that only \ToolX is able to remove ``noise' while the adversary is unable to de-obfusacte without prohibitive collateral damage.  
Second, we define new divergence-based metrics to measure privacy and utility in training \Obfuscator and \Denoiser.

We deploy and evaluate \ToolX's effectiveness on YouTube using 10,000 sock puppet based personas, 10,000 Reddit user personas, and 936 real-world YouTube users \cite{casas2022exposure}. 
Our evaluation shows that \ToolX's \Obfuscator is able to degrade the quality of YouTube's recommendations by up to 87.23\% (privacy) and its \Denoiser is able to recover up to 90.40\% of the actual recommendations (utility).
We show that \ToolX outperforms the state-of-the-art by a factor of 2$\times$ in terms of improving utility for the same level of privacy. 
Crucially, we also demonstrate that \ToolX is stealthy and robust to de-obfuscation by an adversarial system. 
Our evaluation shows that the adversary incurs a prohibitively large number of false positives (order of tens/hundreds of millions) in attempting to undermine stealthiness and achieving  de-obfuscating.







\vspace{-.05in}
\section{Preliminaries}
\vspace{-.0in}
\subsection{Problem Statement}
\vspace{-.0in}
Recommendation systems track users' browsing activity to provide personalized recommendations.  
YouTube, for example, tracks users' browsing activity on YouTube (e.g., videos watched, channel subscriptions) as well as off of YouTube (e.g., activity on other Google services such as Google Search and Google Analytics, or web pages opened in Chrome browser) to personalize homepage and up-next video recommendations \cite{URL_YOUTUBE_SUPPORT}.
%
%
Users can selectively remove certain videos from their YouTube watch history or clear their browsing activity altogether to influence personalized video recommendations.
However, doing so does not necessarily mean that their browsing activity is not tracked in the first place, and thus there is no material privacy benefit to users.
It will also hurt the quality of personalized recommendations because users will likely not receive recommendations for videos they are interested in. 
In summary, users are unable to exert meaningful control over recommendation systems to protect their privacy while preserving the utility of personalized recommendations. 


Prior work has proposed obfuscation approaches to protect user privacy in personalized recommendation systems without relying on cooperation from online content platforms. 
Existing approaches obfuscate a user's browsing history by injecting fake activity (e.g., webpage visits) to manipulate a user's interest segments and targeted ads in online behavioral advertising \cite{zhang2021harpo,xing2013take}. 
These obfuscation approaches are designed for recommendation systems (e.g., online behavioral advertising) where users are not necessarily interested in consuming the output of the recommendation system, rather users are mainly interested in subverting it. 
While these approaches aim to protect user privacy (e.g., inferred interest segments), they do not consider the utility of recommendations (e.g., whether targeted ads are of interest to the user). 
In contrast, in recommendation systems such as YouTube, these obfuscation tools would render the utility of YouTube's video recommendations useless to the user.

\textit{Can we design privacy-enhancing obfuscation approaches that can enhance privacy of users and at the same time preserve utility for users in recommendation systems?}
With this goal in mind, we propose to build a denoiser to remove the ``noisy" videos injected as part of obfuscation. 
It is crucial that the denoiser can only be used by the user and not by the recommendation system. 
To this end, our insight is that the denoiser uses a ``secret'' (specifically, the user's actual browsing history), which is only available to the user and not the recommendation system.  
Therefore, by leveraging the knowledge of a user's actual browsing history, the denoiser may preserve the recommendations related to the users' actual interests while discarding the unrelated recommendations caused by obfuscation. 
Figure \ref{fig:problemwithwithout} illustrates this idea that we next operationalize in \ToolX.

\begin{figure}[t]
\begin{subfigure}{.5\textwidth}
\centering
    \includegraphics[width=.83\linewidth]{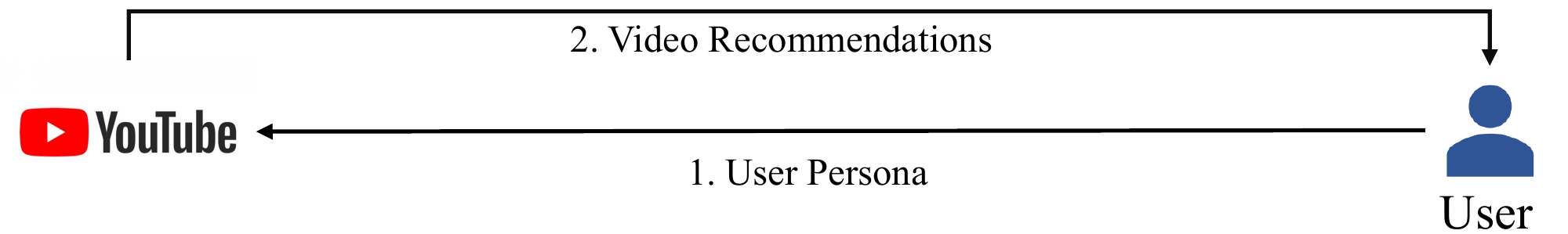}
    \caption{Without obfuscation-denoising system.}
    \label{fig:l2_budget}
\end{subfigure}
\begin{subfigure}{.5\textwidth}
\centering
    \includegraphics[width=0.83\linewidth]{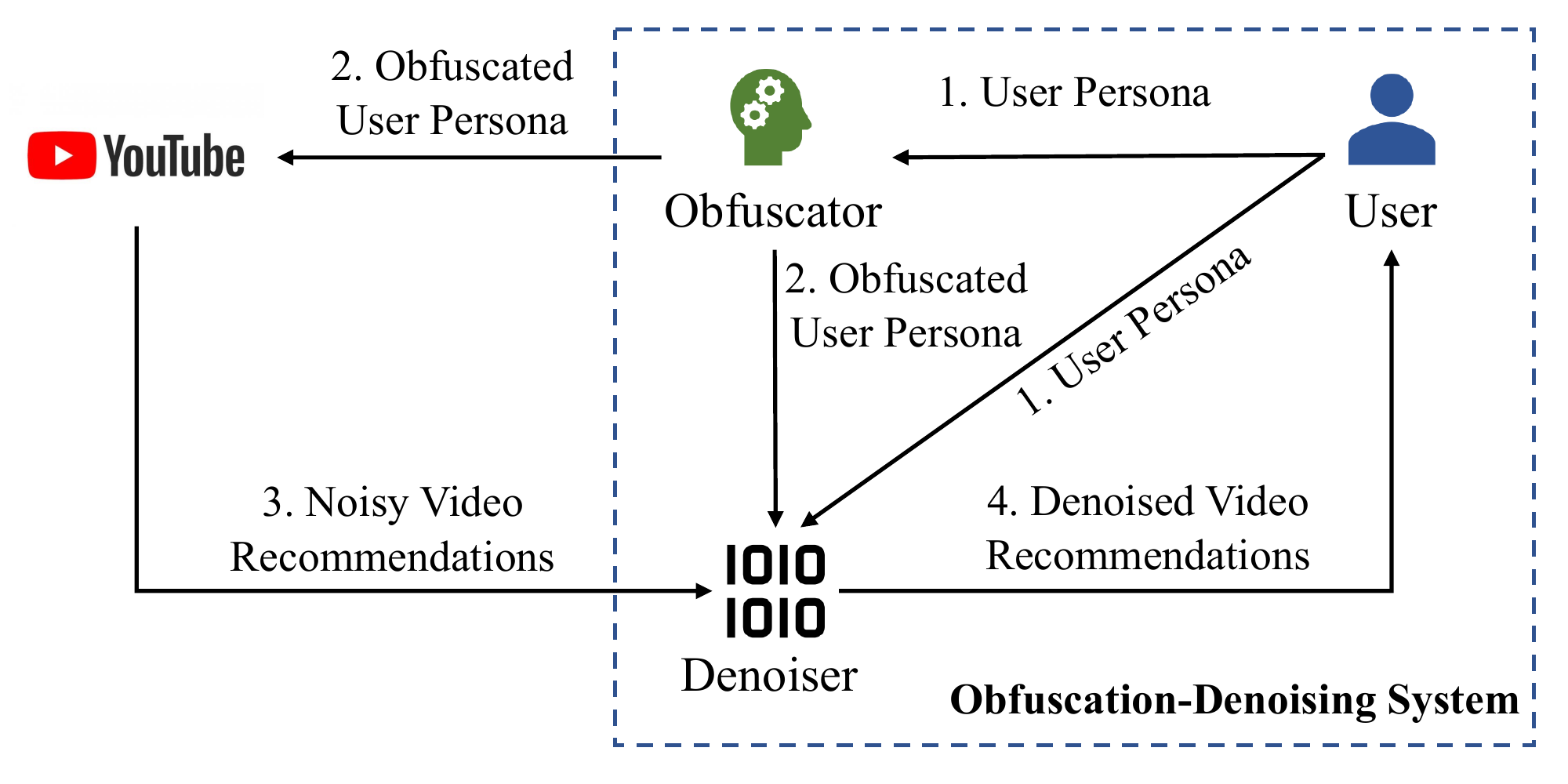}
    \caption{With obfuscation-denoising system.}
    \label{fig:problem}  
\end{subfigure}
\vspace{-.15in}
\caption{Problem Overview.}
\vspace{-.22in}
\label{fig:problemwithwithout}
\end{figure}

\vspace{-.1in}
\subsection{Threat Model}
\label{sec:threatmodel}
\vspace{-.05in}
\paratitle{User.}
The user's goal is to routinely browse YouTube videos and get high-quality recommendation videos fitting their interests, while misleading the YouTube recommendation system such that it can not accurately infer the user's interests. 
To achieve this goal, users install a local obfuscation-denoising system, which consists of an \Obfuscator  and a \Denoiser. The \Obfuscator will obfuscate their video watching history by injecting fake video watches into the user's real video watches, and the \Denoiser will automatically remove ``noisy" recommended videos from YouTube (i.e. caused by obfuscation) that do not fit user's interests. The obfuscation-denoising system is designed to satisfy the following properties:
\begin{itemize}
    \item it is \textbf{privacy-preserving} in that the user's interests are protected from being inferred by YouTube.
    \item it is \textbf{utility-preserving} in that the user can receive high-quality videos fitting their interests.
    \item t has \textbf{low overhead} in that the amount of obfuscation videos inject will not affect the user experience.
    \item it is \textbf{stealthy} in that it is impractical for YouTube to detect the usage of obfuscation-denoising system.
    \item it is \textbf{robust to deobfuscation} in that it is impossible for YouTube to distinguish fake video watches from real video watches.%
    \item it can be \textbf{personalized} in that it can treat video classes differently based on user preferences.
\end{itemize}
\vspace{-.02in}
\paratitle{Recommendation system.}
The goal of the recommendation system is to track user activity for personalized recommendations to maximize user engagement (e.g., click rate and watch time). We assume that the recommendation system has full access to the user's video watching history (including both fake and real video watches though it does not know which is which) and it recommends videos based on the user's video watching history, which is true for YouTube \cite{URL_YOUTUBE_BLOG} (unless the user deletes their watching history). 
We further assume that the recommendation system does not have access to the user's off-platform browsing history (e.g., the user is not simultaneously signed-in to YouTube and other services by YouTube's parent company Google, the user employs Google account controls to prevent off-YouTube information linking (if the user is signed-in to YouTube and other services by YouTube's parent company Google) \cite{WebActivityControls}, or the user uses a browser such as Safari \cite{safari-cookies-blocked} or Firefox \cite{firefox-cookies-blocked} -- or privacy-enhancing browser extension \cite{uBlock-Origin} -- that prevents cross-site tracking). 
We also assume that the recommendation system has substantial computation resources to train a machine learning model for its recommendations. This assumption also holds for YouTube \cite{covington2016deep}. Moreover, we assume that the recommendation system has access to \ToolX once it is public, such that it can use it to analyze the obfuscation approach and possibly train adversarial detectors to detect and filter the usage of \ToolX. More specifically, we assume that the recommendation system has a two-step detection workflow. In the first step, the adversary will train a classifier to detect whether or not a user uses \ToolX. Then, in the second step, if \ToolX usage is detected, the adversary further attempts to achieve deobfuscation by filtering out obfuscation videos and keeping the remaining videos.



\vspace{-.05in}
\section{Proposed Approach}
\label{sec:approach}
\vspace{-.0in}
In this section, we present the proposed utility-preserving obfuscation approach \ToolX.
\vspace{-.05in}
\subsection{Overview}
\label{subsec:sys_overview}
\vspace{-.05in}
As already discussed, at a high-level \ToolX consists of an \Obfuscator designed for enhancing user privacy and a \Denoiser designed for preserving user utility, as demonstrated in Figures \ref{fig:problemwithwithout} and \ref{fig:overalldetails} (in more detail). 
The \ToolX \Obfuscator is a non-trivial adaptation of Harpo's obfuscator \cite{zhang2021harpo} in the context of YouTube's recommendation system.
The \Obfuscator injects fake video playing records into a user's video playing history at random times. We refer to videos played by the user as \textit{user videos} and to videos played by the \Obfuscator as \textit{obfuscation videos}.
Note that without any \textit{obfuscation videos} in the user's video playing history (which is denoted by $V^u$ in this case),  YouTube will recommend a set of videos desired by the user. We refer to this set of videos as ``clean'' YouTube videos.
However, with \textit{obfuscation videos} in the user's video playing history (which is denoted by $V^o$ in this case), YouTube will recommend a set of videos which include videos undesired by the user. We refer to this set of videos as ``noisy'' YouTube videos.
The \Denoiser is designed to predict the class distribution
of ``clean'' YouTube videos from the class distribution of ``noisy'' YouTube videos, such that \ToolX can repopulate a new set of videos with the same class distribution as the ``clean'' YouTube videos.  We refer to the repopulated videos as \ToolX videos. Note that each video class represents a video topic, and we use the 154 affinity segments used by Google \cite{URL_GOOGLEADS} as our video classes.

In more detail, see Figure \ref{fig:overalldetails},
\ToolX starts by generating video embeddings of past played videos via an embedding model. It then uses an \Obfuscator model to select obfuscation videos based on the generated video embeddings. Note that we follow a similar methodology with that in \cite{zhang2021harpo} to formulate the process of inserting \textit{obfuscation videos} as a Markov Decision Process (MDP), and use reinforcement learning (RL) to train the \Obfuscator model to maximize the divergence between the class distribution of ``noisy'' YouTube videos (denoted by $C^o$) and the class distribution of ``clean'' YouTube videos (denoted by $C^u$).
After receiving the ``noisy'' YouTube videos, the \Denoiser outputs an estimate of the class distribution of ``clean'' YouTube videos (denoted by $\hat C^u$), by taking as inputs $V^u$, $V^o$, and $C^o$. Finally, \ToolX will use a repopulation model to generate the set of \ToolX videos with class distribution $\hat C^u$.

\begin{figure}[t]
\begin{subfigure}{.5\textwidth}
    \includegraphics[width=.85\linewidth]{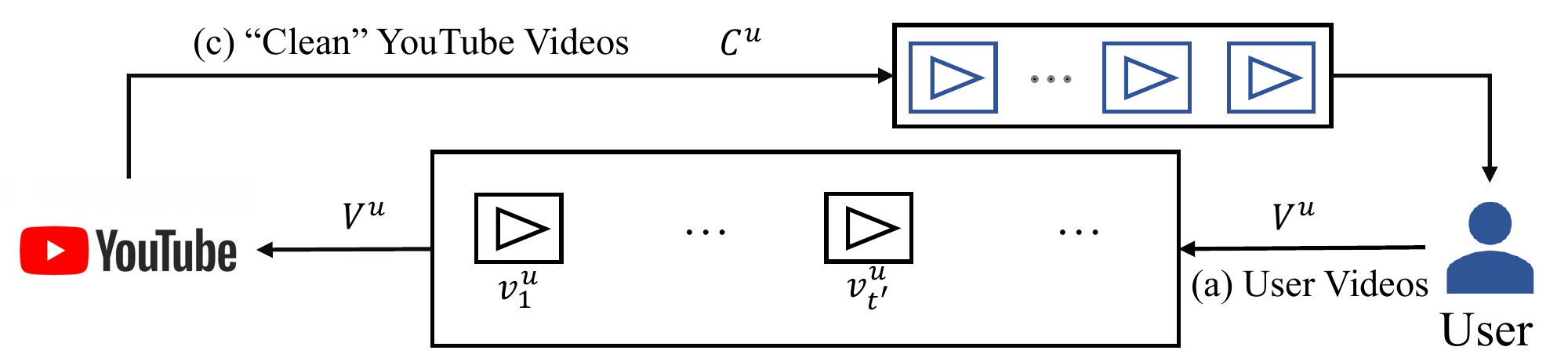}
    \caption{Without obfuscation-denoising system.}
    \label{fig:overall_org}
\end{subfigure}
\begin{subfigure}{.5\textwidth}
    \includegraphics[width=.85\linewidth]{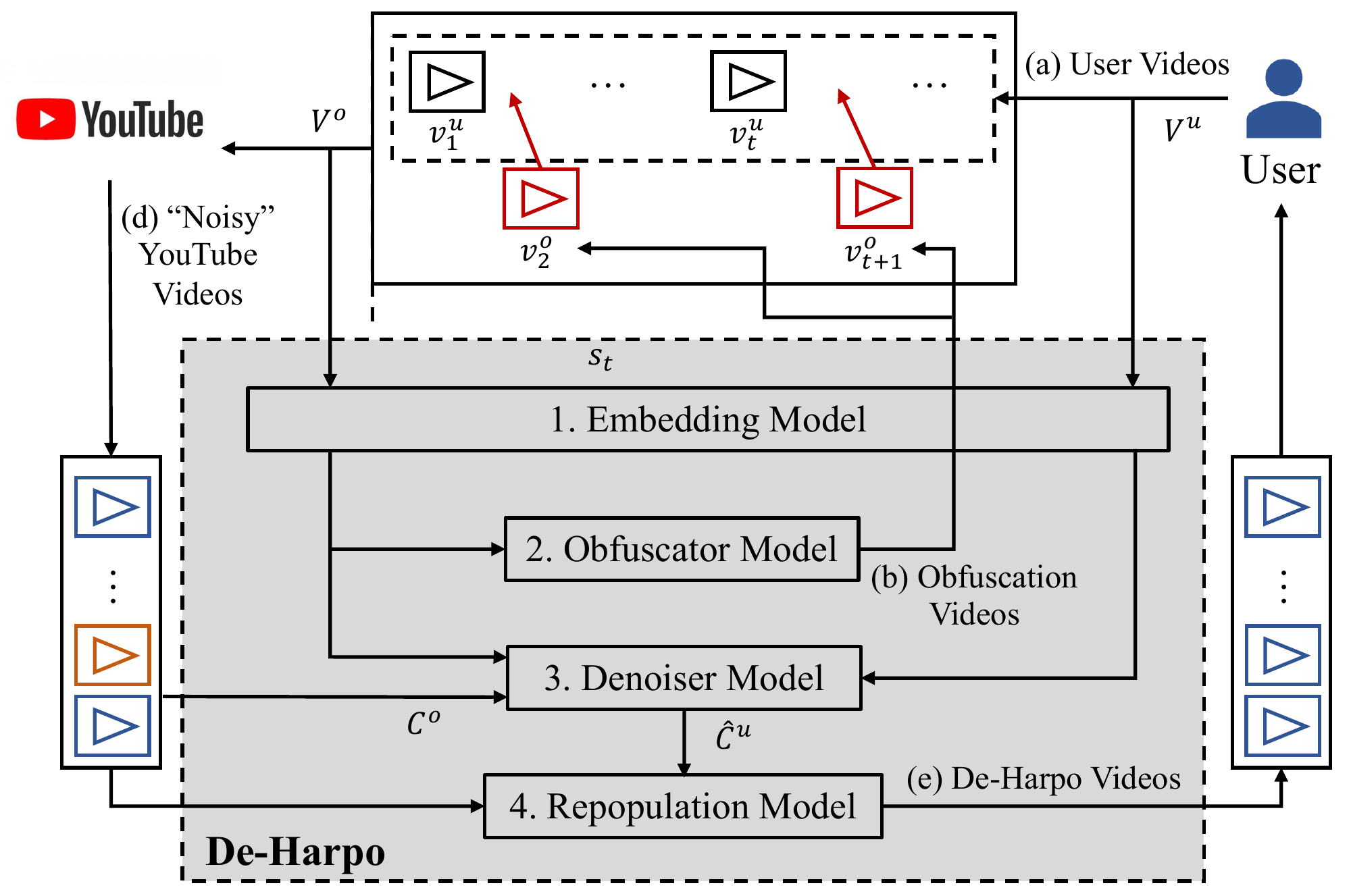}
    \caption{With obfuscation-denoising system.}
    \label{fig:overall}  
\end{subfigure}
\vspace{-.15in}
\caption{Overview of \ToolX. Note that $V^u$ denotes the non-obfuscated user persona, $V^o$ denotes the obfuscated user persona generated by the \Obfuscator, $C^u$ is the recommended video class distribution based on $V^u$, $C^o$ is the recommended video class distribution based on $V^o$, $\hat C^u$ is the \Denoiser's estimate of $C^u$, and $v_i^u$ and $v_i^o$ represent user video and obfuscation video respectively.}
\vspace{-.2in}
\label{fig:overalldetails}
\end{figure}

\vspace{-.10in}
\subsection{System Preliminaries}
\label{subsec:preliminaries}
\vspace{-.05in}
\paratitle{User persona.} We define a user persona as a sequence of YouTube videos. Formally, we denote the non-obfuscated user persona as $V^u=[v_1^u,...,v_n^u]$, where $v_i^u$ represents the $i$th video played by the user, and $n$ is the total number of videos played by the user. We denote the obfuscated user persona as $V^o=[v_1^j,...,v_{n'}^j]$, where $j\in\{u,o\}$, $v_i^u$ and $v_i^o$ represent that the $i$th video is played by the user and \Obfuscator respectively, and $n'$ is the total number of videos played by the user and \Obfuscator combined. 

\vspace{-.02in}
\paratitle{Recommended video class distribution.}
\label{subsubsec:rec_video_dist}
We define the recommended video class distribution of a non-obfuscated user persona $V^u$ (i.e. the class distribution of ``clean'' YouTube videos) as $C^u=[c_1^u,...,c_K^u]$, where $\sum_{k=1}^{k=K}c_k^u=1$, $c_k^u$ is the percentile of videos from the $k$th class among recommended videos for $V^u$, and $K$ is the total number of classes.  Similarly, we define the recommended video class distribution of an obfuscated user persona $V^o$ (i.e. the class distribution of ``noisy'' YouTube videos) as $C^o=[c_1^o,...,c_K^o]$, where $\sum_{k=1}^{k=K}c_k^o=1$ and $c_k^o$ is the percentile of videos from the $k$th class among the recommended videos for $V^o$. 
We use the recommended video class distribution as a representation of the user interest profile built by YouTube instead of directly using the recommended videos. This design choice is made to (i) mitigate the impact of non-determinism in
YouTube’s recommendations and (ii) alleviate the difficulty of making video-level recommendations given an incomplete set of available videos while still making reasonably fine-grained recommendations (among 154 different classes).

\paratitle{Privacy metric.} At a high level, we want to distort the user interest profile built by YouTube for user personas to enhance user privacy. Motivated by the use of the recommended video class distribution as a representation of YouTube's user interest profile, we first define the following privacy metric:
\begin{equation}
\label{eq:priavcy}
    P=E[D_{KL}(C^o||C^u)]=E[\sum_{k=1}^{k=K}c_k^o\log\frac{c_k^o}{c_k^u}],
    \vspace{-.05in}
\end{equation}
which measures the expected KL divergence between the two probability distributions ($C^o$ and $C^u$)\footnote{Note that if $c_k^i=0$ ($i\in\{u,o\}$), we assign a small value to it to avoid getting $\infty$ in KL divergence calculation.}. 

It is worth noting that we use KL divergence since it is a well-established measure of the discrepancy between two distributions, and, together with the closely related mutual information measure they have been used as on-average privacy metrics in myriad of applications including recommendation systems \cite{cuff2016differential,clark2020optimizing,zhang2022privacy,elkordy2022much,parra2014measuring,parra2017myadchoices}. We do not use stricter privacy metrics which provide worst-case privacy guarantees (e.g. differential privacy (DP) \cite{dwork2014algorithmic}), since in the context of our application one would need to inject an enormous number of obfuscation videos to satisfy such guarantees (see Section \ref{subsec:goal} and Appendix \ref{appendix:dp} for a detailed, formal discussion on DP in our context).

During real-world experimentation on YouTube, we observe that the recommended video class distribution of the same persona may differ a bit due to an inherent randomness of the system. 
Since we are interested to measure the divergence thanks to obfuscation only, we define $D^{Min}$ as the expected KL divergence between a random sample of $C^u$ and its mean $\bar C^u$ (i.e., $D^{Min}=E[D_{KL}(\bar C^u, C^u)]$), and subtract from $P$ the divergence caused by randomness, that is, we work with $P - D^{Min}$. 
Furthermore, since $P$ is unbounded, we normalize the privacy metric as follows. Denote the user persona set as $\mathcal{V}$, which consists of all user personas. Let $V^u$ and $V^{u^{'}}$ be two user personas uniformly and randomly sampled from $\mathcal{V}$, and let their associated recommended video class distributions be $C^u$ and $C^{u^{'}}$ respectively. Then, we define the normalized privacy metric $P^{Norm}$ by:
\begin{equation}
\label{eq:priavcy_norm}
    P^{Norm}=\frac{P - D^{Min}}{D^{Max} - D^{Min}},
    \vspace{-.05in}
\end{equation}
where $D^{Max}=E[D_{KL}(C^u,C^{u'})]$ is the expectation of the KL divergence between $C^u$ and $C^{u'}$ and thus corresponds to the average ``distance" between two video class distributions of two randomly selected users. Hence, $P^{Norm}$ measures the fraction of the maximum possible divergence that obfuscation achieves, on average. Note that for both $P$ and $P^{Norm}$, the higher their value is, the more effective the \Obfuscator is in enhancing user privacy (see Figure \ref{fig:metric}). 

\paratitle{Utility metric.} In our threat model, the user sends the obfuscated persona to YouTube and then receives a ``noisy'' recommended video list with class distribution $C^o$. However, the user desires the ``clean'' recommended video list with class distribution $C^u$. Our \Denoiser is designed to predict $C^u$ from $C^o$, such that \ToolX can repopulate the ``clean'' recommended video list from $C^u$. With the above in mind, we define our utility loss metric as follows:
\begin{equation}
\label{eq:utility_loss}
U_{Loss}=E[D_{KL}(\hat C^u||C^u)]=E[\sum_{k=1}^{k=K}\hat c_k^u\log\frac{\hat c_k^u}{c_k^u}],
\vspace{-.05in}
\end{equation}
where $\hat C^u$ is the output of the \Denoiser, representing its estimation of $C^u$. Smaller $U_{Loss}$ means smaller divergence between the non-obfuscated recommended video class distribution $C^u$ and the \Denoiser's estimate of such distribution $\hat C^u$ and thus a better estimate. The theoretical minimum that this value can take is 0, representing two identical distributions i.e. the noise is perfectly removed.
Note that without applying the \Denoiser, the utility loss equals the value of privacy $P$ (since $\hat C^u= C^o$). The \Denoiser can reduce the utility loss caused by the \Obfuscator by $P-U_{Loss}$ which represents the \Denoiser utility gain. Similarly to above, because $P$ is unbounded and YouTube's randomness causes, on average, a divergence of $D^{Min}$, we define the normalized utility gain metric as follows:
\begin{equation}
\label{eq:utility_gain}
\vspace{-.05in}
    U_{Gain}^{Norm}=\frac{P-U_{Loss}}{P-D^{Min}},
\end{equation}
which represents the fraction of obfuscation noise reduced by the \Denoiser, on average. Higher $U_{Gain}^{Norm}$ implies that the \Denoiser can reduce the utility loss caused by the \Obfuscator more effectively and a value of 100\% indicates a complete removal of noise (see Figure \ref{fig:metric}).

\begin{figure}[h]
    \centering
    \includegraphics[width=.80\linewidth]{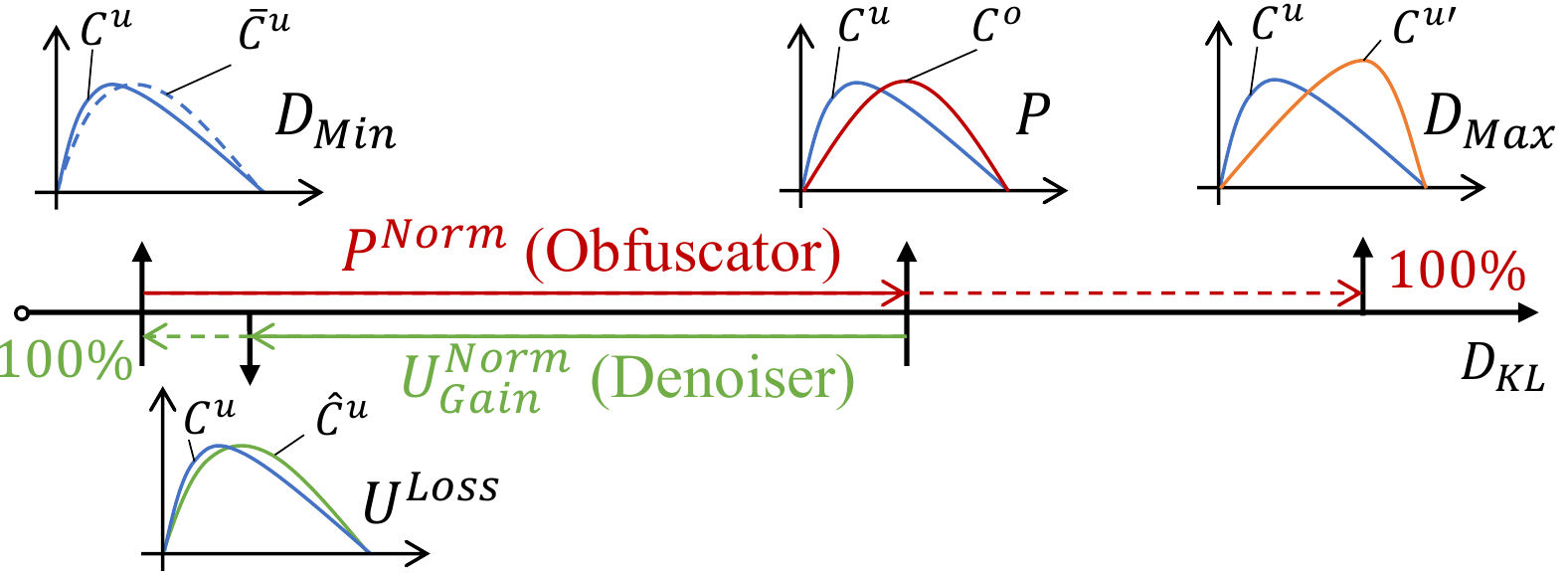}
    \vspace{-.13in}
    \caption{Privacy and utility metrics.}
    \vspace{-.20in}
    \label{fig:metric}
\end{figure}

\vspace{-.05in}
\subsection{Performance Goals and Guarantees}
\label{subsec:goal}
\vspace{-.02in}

\paratitle{Performance goals.}
As discussed already, our goal is to obfuscate the actual user profile, that is, the inferred user’s interests by YouTube from the user’s video watch history. (We do not consider other sub-channels via which YouTube may infer user interests, see threat model details in Section \ref{sec:threatmodel}.) 
In view of obfuscation, YouTube’s goal is to reconstruct the actual user profile (what YouTube would have inferred by the user’s video watch history in the absence of obfuscation) as accurately as possible from the obfuscated user profile (what YouTube infers by the user’s video watch history in the presence of obfuscation).
Since YouTube’s user profiles are not public, we infer them from YouTube's recommended videos to the user, and, more specifically, from the the recommended video class distribution (where we use the 154 affinity segments used by Google as our video classes).

Motivated by the above, our privacy metric maximizes the distance (normalized KL divergence) between the recommended video class distribution before and after obfuscation. If the distance between the recommended video class distribution before and after obfuscation is almost the same with the distance between the recommended video class distribution before obfuscation and the recommended video class distribution of another random user, then YouTube’s recommendations for the user under study are essentially random implying that YouTube is not able to learn the user’s actual interests from the obfuscated user’s video watch history. Tellingly, in Section \ref{subsec:tradeoff} we do show that with merely 70\% of videos in a user persona being obfuscation videos, the distance between the recommended video class distribution before and after obfuscation is already 93\% of the distance between random distributions.

\vspace{-.02in}
\paratitle{Performance guarantees.}
A discussion about performance guarantees is in order.
First, can \ToolX effectively de-noise the noisy recommendations such that their utility is high, despite that recommendations are as if they were random? Section \ref{subsec:utility} answers affirmatively. Related to this, if De-Harpo can de-noise recommendations, can't YouTube de-noise them as well? Sections \ref{subsec:stealthy}, \ref{subsec:robust} show that it cannot in practice, and Section \ref{subsec:secret} offers a formal explanation why it can't. Note that even though YouTube unavoidably learns the interests of a user corresponding to the user videos that the user actually watches, it also learns interests corresponding to the obfuscation videos, the relative importance of each interest is altered, and YouTube has no practical way of telling which interest is real and which is not.~\footnote{If a user wishes YouTube to not learn about the user's real interests at all, the user should not use YouTube: Even though YouTube in theory offers a method to remove a video from the watch history, (i) even if the video is deleted the corresponding interest categories are not \cite{Mozilla-Report-YouTube} and (ii) there is no "unlearning" at the ML level and hence the recommendation algorithm will still recommend videos based on the total watch history.}

Second, both our privacy and utility metrics are based on expectations, see Eq. (1)-(4). Hence, \ToolX guarantees performance goals ``on-average". But what about ``worst-case" privacy guarantees? In our context this would require that no matter how unique the original video watch history of a specific user may be, YouTube should not be able to learn any unique interests of this user regardless of how unsuccessful it may be across all users on average. There is a large line of prior work on both ``on-average" \cite{clark2020optimizing,zhang2022privacy,elkordy2022much,parra2014measuring,parra2017myadchoices} and ``worst-case" \cite{dwork2006differential,dwork2014algorithmic,cuff2016differential,zhang2022privacy,elkordy2022much} privacy guarantees. It is intuitive that strict definitions of privacy like differential privacy (DP) \cite{dwork2014algorithmic}, which guarantee privacy in the worst-case, cannot be satisfied for recommendations systems actively used by users. For a matter of completeness, we provide a formal proof about why differential privacy can not be achieved in Appendix \ref{appendix:dp}. A summary of the argument follows:
Assume that there is one video $V$ in user persona $P_1$ (i.e. video watch history) which is not in user persona $P_2$, and the obfuscator $O$ (the randomized function in the DP definition) can not remove it from $P_1$. Let $P$ be a user persona without video $V$ that we observe. Then, the probability of $O(P_1)$ being $P$ is zero while the probability of $O(P_2)$ being $P$ is non-zero. Thus, per the DP definition, the $\epsilon$ for this worst-case scenario will be infinite and DP is violated. 

\vspace{-.08in}
\subsection{System Model}
\label{subsec:sys_model}
\vspace{-.02in}
\paratitle{Obfuscator.}
The obfuscation video selection process of \Obfuscator is formulated as a Markov Decision Process (MDP). 
At the beginning of each time step, a video will be played. If the played video is an obfuscation video injected by the \Obfuscator, we refer to this time step as an obfuscation step.  
Let $\alpha\in [0,1)$ be the obfuscation budget which we use as a system parameter to control the percentile of obfuscation videos.
At each time step, with probability $\alpha$ an obfuscation video will be injected by \Obfuscator into the user persona.
Let $s_t$ denote the state of the MDP at obfuscation step $t$, defined as all the played videos until now, and $a_t$ denote the action of the MDP at obfuscation step $t$, which represents the obfuscation video sampled based on the MDP policy.
The MDP policy is a probability distribution which outputs the probability of selecting obfuscation video $i$ given state $s_t$, and we denote this probability by $\pi(a_t=i|s_t)$. 
We associate a reward $r_t$ for the action $a_t$ at obfuscation step $t$.
We set the reward $r_t$ to $P_t-P_{t-1}$, where $P_t$ is the privacy metric value at obfuscation step $t$.
The goal of solving this MDP is to find the optimal policy, such that the accumulative rewards $\sum_{t=1}^{t=T}r_t$ can be maximized. Note that $T$ denotes the total number of obfuscation steps. (We consider a finite-horizon MDP)
Appendix \ref{appendix:mdp} discusses the MDP in more detail.

\vspace{-.02in}
\paratitle{Denoiser.} At a high level, we model the \Denoiser as a mapping from the recommended video class distribution of the obfuscated user persona $C^o\in\mathbb{R}^K$ to the recommended video class distribution of the non-obfuscated user persona $C^u\in\mathbb{R}^K$ ($K$ is the total number of video categories). 

Estimating directly $C^u$ from $C^o$ can be challenging. In the extreme case, where the mutual information between $C^u$ and $C^o$ is zero \cite{mackay2003information}, it is impossible for the \Denoiser to estimate $C^u$ from $C^o$. To estimate $C^u$, the \Denoiser may leverage side information indicating how the obfuscation videos are injected into the user personas, as in this case it may be able to undo the effect of obfuscation videos in the recommendations list. 
In our application, such side information is explicitly available to users ($V^u$ portion of $V^o$), since the \Obfuscator is installed locally and users know exactly how the obfuscation videos are injected into user personas. Therefore, our \Denoiser is modeled to be a functional mapping from $(V^u, V^o, C^o)$ to $C^u$.

\subsection{The ``Secret" of the Denoiser} 
\label{subsec:secret}
We use the information theory concept of mutual information (MI) to explain why the \Denoiser works.
Recall that the recommendation system cannot distinguish user from obfuscation videos thus does not know the user's video playing history $V^u$.
In our system, both $V^u$ and $V^o$ are modelled as random vectors, and  $V^o$ is generated from $V^u$ by the \Obfuscator, which is a random function. Additionally, both $C^u$ and $C^o$ are random vectors, which are generated from $V^u$ and $V^o$ respectively by the YouTube recommendation system. By applying the chain rule of MI, we can derive the following equation:
\vspace{-.05in}
\begin{align}
\vspace{-.1in}
\label{eq:mi1}
    I(C^o,V^o,V^u;C^u) = I(C^o,V^o;C^u) + I(V^u;C^u|C^o,V^o),
\end{align}
where $I(C^o,V^o,V^u;C^u)$ is the MI between $(C^o,V^o,V^u)$ and $C^u$, $I(C^o,V^o;C^u)$ is the MI between $(C^o,V^o)$ and $C^u$, and $I(V^u;C^u|C^o,V^o)$ is the MI between $V^u$ and $C^u$ conditioning on $(C^o,V^o)$.

First, we show that the non-obfuscated user persona $V^u$ can be leveraged by the \Denoiser to better estimate $C^u$. 
Since $C^u$ is generated by YouTube recommendation system given $V^u$, $V^u$ is correlated with $C^u$, thus $I(V^u;C^u|C^o,V^o)>0$. Hence,
\vspace{-.05in}
\begin{align}
\vspace{-.1in}
\label{eq:mi_ineq_1}
I(\underbrace{C^o,V^o,V^u}_{\mbox{\scriptsize with secret}};C^u) > I(\underbrace{C^o,V^o}_{\mbox{\scriptsize without secret}};C^u).
\end{align}
Since the MI between $(V^u, V^o, C^o)$ and $C^u$ is larger than the MI between $(C^o,V^o)$ and $C^u$, $(C^o,V^o,V^u)$ can reveal more information about $C^u$ than $(C^o,V^o)$, leading to a more accurate estimate of $C^u$. As an aside, note that YouTube may attempt to de-obfuscate $V^u$ from $V^o$. We evaluate the robustness of the \Obfuscator against de-obfuscation in Section \ref{subsec:robust}.

Second, we show that including $C^o$ and $V^o$ may help to further enhance the effectiveness of the \Denoiser, compared with using $V^u$ only. Based on the chain rule of MI, we can rewrite Eq. (\ref{eq:mi1}) as follows:
\vspace{-.05in}
\begin{align}
\vspace{-.1in}
\label{eq:mi2}
    & I(V^u,V^o,C^o;C^u)\nonumber \\
    & = I(V^u;C^u) + I(C^o;C^u|V^u)+ I(V^o;C^u|C^o,V^u).
\end{align}
Consider the term $I(C^o;C^u|V^u)$. 
$C^o$ depends on $V^u$ and the obfuscation videos, and $C^u$ depends on $V^u$. 
Crucially, they both also depend on the (non deterministic) YouTube recommendation system. 
Hence, even when $V^u$ is given, there is non-zero MI between $C^o$ and $C^u$, that is, $I(C^o;C^u|V^u)>0$, leading to the following inequality:
\vspace{-.05in}
\begin{align}
\vspace{-.1in}
\label{eq:mi_ineq_2}
I(V^u,V^o,C^o;C^u) > I(V^u;C^u),
\end{align}
which means the MI between $(V^u, V^o, C^o)$ and $C^u$ is larger than the MI between $V^u$ and $C^u$ only. Intuitively, knowing the pair $V^o,C^o$ reveals information about how the YouTube recommendation system selects videos to recommend given a user video watching history. 
Therefore, the \Denoiser taking $C^o$ and $V^o$ as additional inputs can learn more information about $C^u$, as compared to the \Denoiser taking only $V^u$ as input. Our evaluation results in Section \ref{subsec:utility} empirically support the above analysis.

\begin{figure}[!t]
\centering
\begin{subfigure}{.21\textwidth}
\centering
    \includegraphics[width=.99\linewidth]{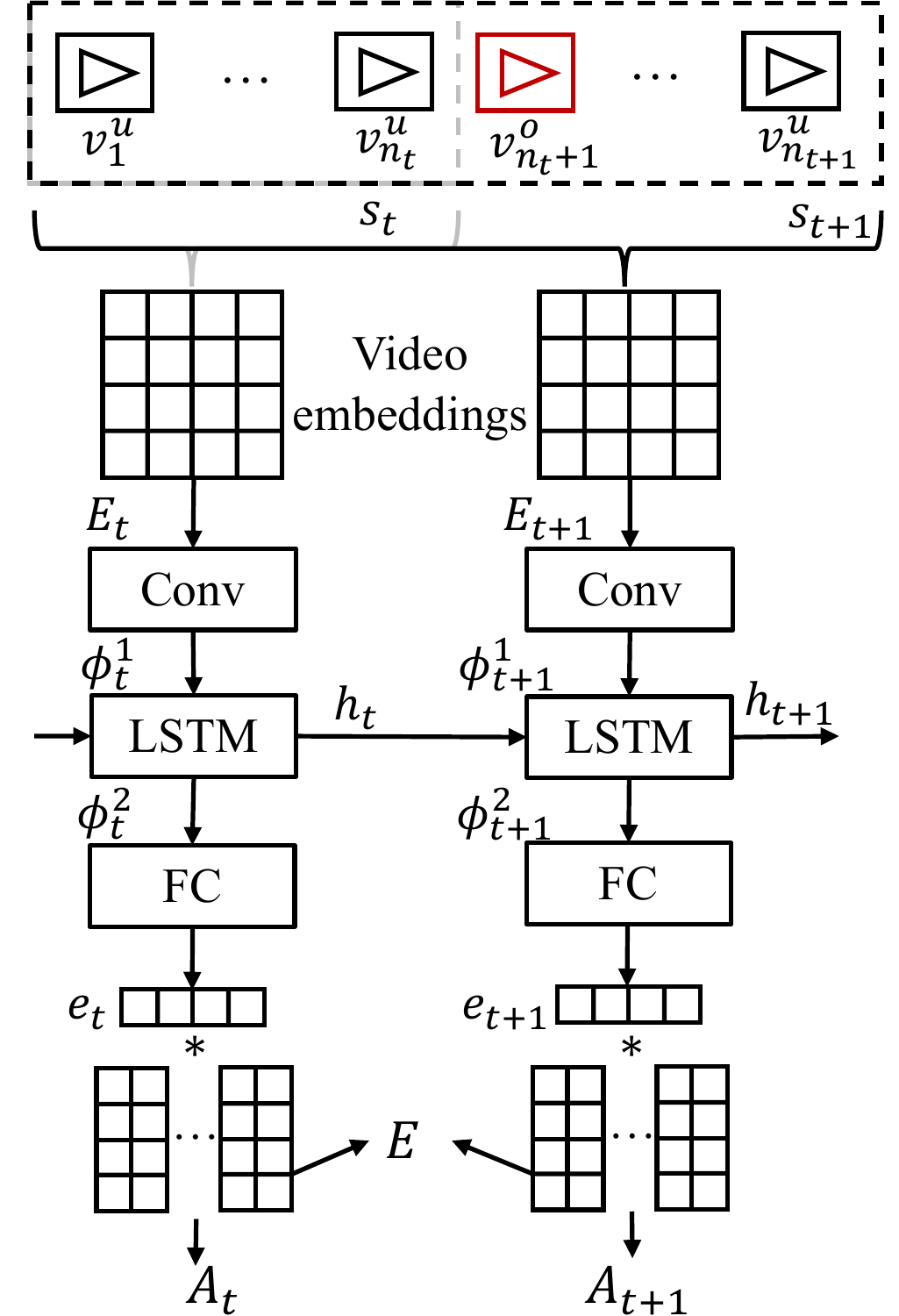}
    \caption{\Obfuscator}
    \label{fig:stru2}
\end{subfigure}
\begin{subfigure}{.21\textwidth}
\centering
    \includegraphics[width=.99\linewidth]{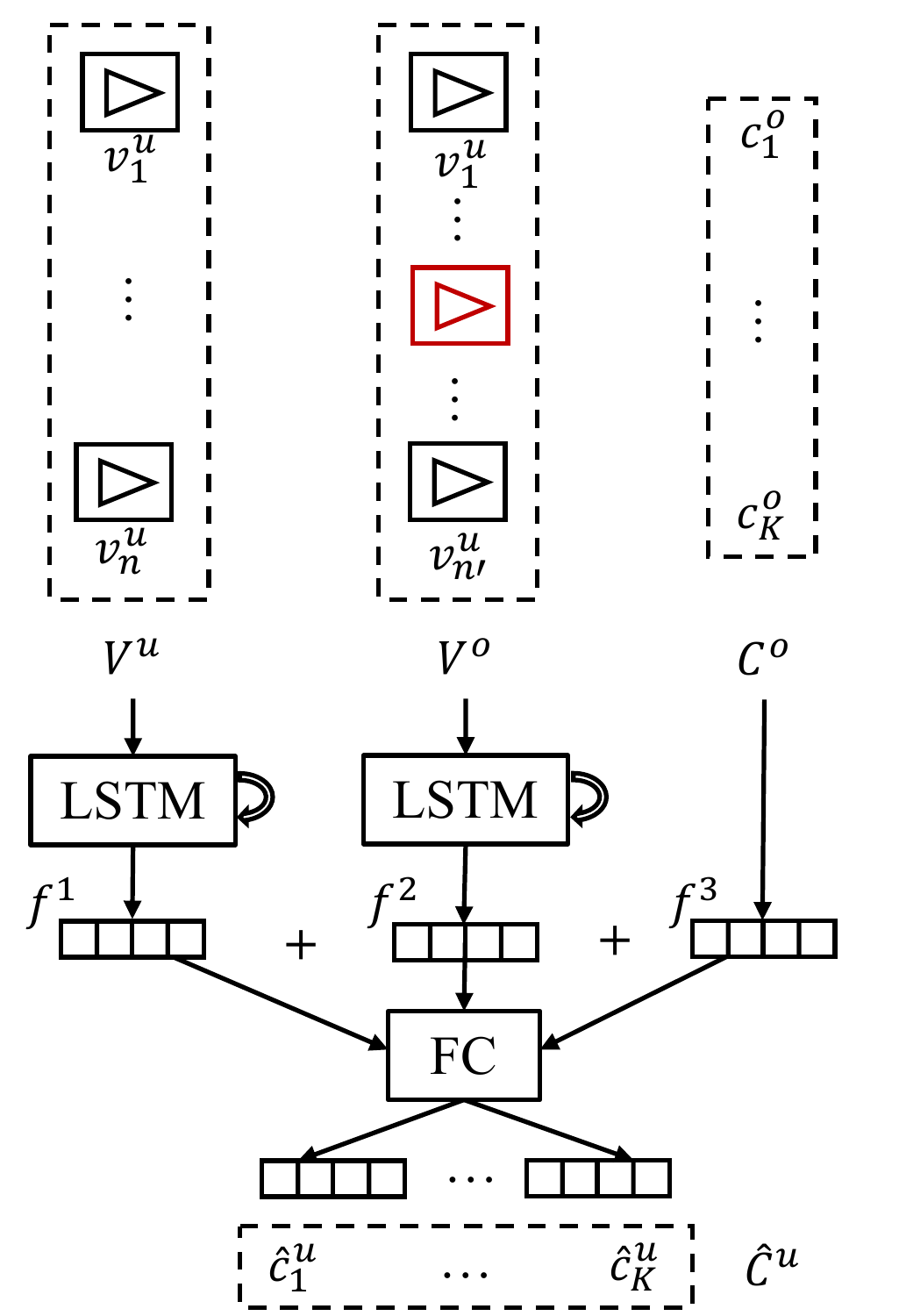}
    \caption{\Denoiser}
    \label{fig:stru3}
\end{subfigure}
\vspace{-.15in}
\caption{Details of system design. } 
\vspace{-.15in}
\label{fig:budget}
\end{figure}

\vspace{-.0in}
\section{System Design and Implementation}
In this section, we describe the detailed design of \ToolX and how we implement \ToolX as a browser extension. \ToolX consists of five modules: (1) a video embedding model that maps videos into embeddings; (2) a \Obfuscator model that selects obfuscation videos based on the video embeddings of played videos;  (3) a \Denoiser model that estimates the class distribution of ``clean" YouTube videos from the class distribution of ``noisy" YouTube videos; (4) a repopulation model that outputs \ToolX videos with the estimated class distribution of ``clean" YouTube videos; (5) a surrogate model used to train the \Obfuscator model offline efficiently (see Figure \ref{fig:overall} for the workflow of modules (1)-(4)).

\vspace{-.03in}
\subsection{Video Embedding}
\label{subsubsec:video_emb}
To make our system scalable to millions of YouTube videos without being restricted to a fixed set,
we represent each video by an embedding vector. A YouTube video typically consists of metadata (e.g. title, description, view count, rating, thumbnail, etc), a sequence of image frames (i.e. the video), and the transcript for the video. Since a video's transcript is a good representation of its content and it is more computationally and spatially efficient to process the transcript compared to processing the original video stream, we use video metadata and transcript to generate the video embedding, where the video embedding for video $v_i$ is denoted by $e_i\in\mathbb{R}^{404}$ (see Figure \ref{fig:stru1} in Appendix \ref{appendix:video_emb} for details) \footnote{Note that the YouTube recommendation system will use the image frames and some other private features to generate the video embedding (see \cite{covington2016deep}). We acknowledge that by including these features, our video embeddings may be closer to the actual embeddings used by YouTube. However, since our video embeddings can already yield a surrogate model (see Section \ref{subsubsec:surr_model} with reasonable performance and it is more computationally efficient, we choose the current design of our video embeddings.}. 

\vspace{-.03in}
\subsection{Obfuscator Model} 
\label{subsubsec:obfu_model}
As discussed before, we model the process of injecting obfuscation videos as an MDP.
Due to the prohibitively large state space of this MDP, we use RL, parameterized by a deep neural network, to learn the optimal policy for obfuscation video selection. 

The \Obfuscator takes as input the state at each 
obfuscation step, and outputs a video embedding. 
By measuring the cosine similarity between the output video embedding and each obfuscation video embedding, the \Obfuscator derives the probability distribution of the obfuscation video selection, where an obfuscation video with more similar embedding as the output video embedding is assigned a higher probability. Specifically, as shown in Figure \ref{fig:stru2}, the \Obfuscator consists of a convolutional layer (Conv), a LSTM layer, and a fully-connected layer (FC). At step $t$, the convolutional layer takes the embeddings of the past $n_t$ videos as input ($E_t\in\mathbb{R}^{n_t\times 404}$) and outputs a real vector with $m_1$ elements ($\phi^1_t\in\mathbb{R}^{m_1}$). Next, the LSTM layer takes $\phi^1_t$ and the hidden vector at obfuscation step $t-1$ with $m_3$ elements $h_{t-1}\in\mathbb{R}^{m_3}$ as input, and outputs a real vector with $m_2$ elements ($\phi^2_t\in\mathbb{R}^{m_2}$) and the hidden vector $h_t\in\mathbb{R}^{m_3}$ for obfuscation step $t$ ($m_1=m_2=m_3=128$ in our experiments). 
Finally, a linear layer converts $\phi^2_t$ into a real vector with the same dimension as the video embedding. We denote this vector by $e_t\in\mathbb{R}^{404}$ as it represents the target embedding for the obfuscation video. 
Let $E=[e_1,...,e_M]$ denote the embedding vectors of the $M$ obfuscation videos at our disposal. Then, the probability of selecting the $i$-th obfuscation video, $i=1 \ldots M$, is calculated proportionally to the similarity between its embedding and the target embedding after normalizing using a softmax function:

\begin{equation}
\label{eq:actor}
    \pi(a_t=i|s_t)=\frac{e^{\langle e_t,e_i\rangle}}{\sum_{i=1}^{i=M}e^{\langle e_t,e_i\rangle}},
\end{equation}
where $\langle x,y\rangle$ denotes the inner product between $x$ and $y$. Note that we use the on-policy RL algorithm A2C (Advantage Actor and Critic)\cite{openai2017a2c} to train the \Obfuscator (see Appendix \ref{appendix:train_and_test}).

Recall that the \ToolX \Obfuscator is a non-trivial adaptation of Harpo \cite{zhang2021harpo} to YouTube. An important technical difference is that by calculating a target embedding and then selecting an obfuscation item (video in case of YouTube) based on 
the similarity between its embedding and the target embedding, the \ToolX \Obfuscator can handle an unlimited and varying number of possible obfuscation videos without requiring re-training when the set of obfuscation videos changes.

\subsection{Denoiser Model}
\label{subsubsec:deno_model}
As mentioned in Section \ref{subsec:sys_model}, the \Denoiser has three inputs: the non-obfuscated user persona $V^u$, the obfuscated user persona $V^o$, and the recommended video class distribution of obfuscated user persona $C^o$. The \Denoiser uses two LSTM layers and an FC layer to encode inputs, as shown in Figure \ref{fig:stru3}. Specifically, the first LSTM layer takes as input the embeddings of videos in the non-obfuscated user persona $V^u$ recurrently and outputs its final hidden vector $f^1\in\mathbb{R}^{n}$ (we use $n=128$ in our experiments). Similarly, the inputs of the second LSTM layer are the embeddings of videos in the obfuscated user persona $V^o$ and its output is its last hidden vector $f^2\in\mathbb{R}^{n}$. Last, the FC layer converts the class distribution $C^o\in\mathbb{R}^K$ (where $K$ represents the number of categories) into a real vector $f^3\in\mathbb{R}^{n}$. By concatenating vectors $f^1$, $f^2$, and $f^3$ into a single vector with dimension $3n$, a final FC layer is used to map it into the estimated recommended video class distribution  $\hat C^u\in\mathbb{R}^K$. Note that we train the \Denoiser based on supervised learning with stochastic gradient descent (see Appendix \ref{appendix:train_and_test}).

\subsection{Repopulating Recommended Videos} 
\label{subsubsec:repo_video}
Recall that the \Denoiser in \ToolX outputs a target video class distribution $\hat C^u$.
%
In order to go from a target video class distribution back to actual videos on the user's screen, we repopulate the recommendations using a browser extension.

For efficiency, we maintain a ``bank" of videos per class
and use it to repopulate the recommendations.
This leads to the question of how often should we refresh this bank in order to get a suitable trade-off between the recency of the videos and the overhead required to collect the videos.
To ascertain what the optimal time period would be to refresh this bank we run a 24-hour experiment where we query the name of a class in the YouTube search bar as a proxy for the explicit class and collect statistics for each class's most popular recommended videos.
Specifically, we run the same query each hour, collect the top 20 search results per query, and compute the percentile of top queries that remain the same.
The results indicate that for most classes about 70-80\% of the top search results remain the same. 
Motivated by this, we periodically -- or on an on-demand basis -- crawl a sufficiently large number of videos for each class to re-populate our bank.
%
%
Note that the ``noisy" recommended videos removed during the repopulation process will be included into our obfuscation video sets such that they can be played later to augment the obfuscation effect.

\subsection{YouTube Surrogate Model} 
\label{subsubsec:surr_model}
The training of the \Obfuscator requires frequent interactions with the YouTube recommendation system. However, directly interacting with YouTube is time-consuming, since it takes more than 30 minutes to construct a single persona (as described in Section \ref{subsec:dataset}). To train the \Obfuscator efficiently, we build a surrogate model as a replication of the actual YouTube recommendation system. 

The architecture of our surrogate model consists of a LSTM layer and a FC layer. The LSTM layer takes as input the embeddings of videos in a user persona recurrently and outputs its last hidden vector, which will be used as the input of the FC layer. Then, the FC layer will output the recommended video class distribution $C^i\in\mathbb{R}^K$, where $i\in\{u,o\}$ (see Figure \ref{fig:stru4} in Appendix \ref{appendix:surro_model} for details). Note that we train the surrogate model via supervised learning with stochastic gradient descent (see Appendix \ref{appendix:train_and_test}).
We also provide detailed discussion about the rationale of designing such surrogate model and the differences between our surrogate model and prior works in Appendix \ref{appendix:surro_model}.

\subsection{\ToolX Implementation} 
\label{subsubsec:sys_imp}
We implement \ToolX as a browser extension, which consists of two components: \Obfuscator and \Denoiser.

\paratitle{Obfuscator.} 
The \Obfuscator is a lightly modified version of Harpo's browser extension \cite{zhang2021harpo}.
The browser extension plays the selected obfuscation videos in a background tab that is hidden from users. 
In order to determine the timing of playing obfuscation videos, the \Obfuscator component uses a background script to keep monitoring the URLs visited by the user and estimating the arrival rate of YouTube videos watched by user as $\lambda^u$. 
Then, given obfuscation budget $\alpha$, the \Obfuscator component will use a Poisson Process with rate 
$\lambda^o=\frac{\lambda^u\alpha}{1-\alpha}$ to inject randomly select obfuscation videos. 
%
%
%
To mimic a typical user who watches one video at a time, the selected obfuscation videos can be played only when the user is not already using YouTube. 
However, if a user continues to watch YouTube videos for an extended time period, we can simultaneously play the selected obfuscation videos (in the background as explained above) to prevent YouTube from getting unfettered user watch history.\footnote{It is not entirely uncommon for YouTube users to play videos in multiple browser tabs.} 

\paratitle{Denoiser.} 
The \Denoiser has two modules: HTML modification and the denoising. 
The HTML modification module is implemented in the background script. 
Whenever the user visits YouTube homepage, the HTML modification module sends the ``noisy" homepage recommendation video list requested from the content script to the denoising module. 
Once HTML modification module receives the ``clean" homepage recommended video list from the denoising module, it will modify the HTML of YouTube homepage to show ``clean" homepage recommended videos. 
The denoising module is implemented in the back-end, which is responsible for accessing the metadata of the received ``noisy'' homepage recommended videos, running the \Denoiser model to convert the ``noisy'' homepage recommended video list into a ``clean" one, and then sends the ``clean'' video list back to the HTML modification module. 
We evaluate the implementation overhead of the \Obfuscator and \Denoiser components in Section \ref{subsec:overhead}.

%

\vspace{-.05in}
\section{Experimental Setup}
\label{sec:setup}
\subsection{User Personas}
\label{subsec:persona}

To train and evaluate \ToolX, we need to construct realistic user personas. However, it is challenging to have access to real-world YouTube users' video watch history in a large scale as our training data. To address this concern, we design two approaches that can generate a large number of synthetic user personas to simulate real-world users: 1) the first approach creates sock puppet based personas by following the ``up next'' videos recommended by YouTube; 2) the second approach leverages the YouTube videos publicly posted by Reddit users as an approximation of their YouTube user personas. We use these synthetic user persona datasets to train \ToolX. Then, we evaluate it on both synthetic user persona datasets and a real user persona dataset that contains YouTube video watch history collected from real-world users. We describe these three datasets in detail below.

\vspace{-.03in}
\paratitle{Sock Puppet Based Personas.} According to YouTube, about 70\% of the videos viewed on the platform are sourced from its recommendation system \cite{solsman2021youtubeai}. Accordingly, given the current video, the ``up next'' videos recommended by YouTube are good representations of the potential subsequent videos watched by real-world YouTube users.
Based on this insight, we build a sock puppet user persona model that generates random \textit{recommendation trails} from a single \textit{seed video} to model realistic YouTube user personas, by keeping playing one of the ``up next'' videos recommended by YouTube randomly with uniform probability (see Appendix \ref{appendix:sock_puppet} for details). Since these personas are synthetically built, we are able to exercise more control over the distribution of watched videos. 
%

In total, we generate 10,000 sock puppet based personas with 40 videos each. Note that we set the length of each user persona as 40, since we empirically observe that 40 videos can trigger enough personalized recommended videos on the YouTube homepage and the average time it takes to watch them is close to the average daily time spent by each YouTube user (35 min) \cite{URL_YOUTUBE_BLOG}.

\vspace{-.03in}
\paratitle{Reddit User Personas.} 
As a second way of simulating real-world user personas in a large scale, 
we gather YouTube links publicly posted by social media users as an approximation of their YouTube personas. While there are various social media platforms where users can share YouTube videos, we choose to collect data from Reddit, since it is one of the largest and most popular communities where users post links related to their interests, and millions of Reddit's user submissions\footnote{A Reddit user submission is a json file storing metadata of a Reddit user's posts, including the username, the timestamp, the URL of post, the text, etc.} are publicly available. 

Specifically, we download Reddit user submissions from 2017 to 2021 using APIs provided by \texttt{pushshift.io} \cite{URL_PUSHIFTIO}. For each user submission, we first extract the username and all YouTube links posted by this Reddit account. Next, we filter out any duplicate or broken links. Then, we extract the YouTube video ids from these remaining links in order.
Finally, we remove users with less than 40 YouTube video posts, since a small number of videos is not a fair approximation of the user's actual YouTube persona. In total, we collect 10,000 Reddit user personas with length 40.

\vspace{-.03in}
\paratitle{Real-world YouTube Users.} 
To conduct a more realistic evaluation of \ToolX, we use a real-world user dataset from \cite{casas2022exposure}. This dataset contains the web browsing histories of 936 real users collected through Web Historian \cite{URL_WEBHistorian} for three months. It is a good representative of real YouTube users, since: 1) the demographic distribution of these users, including their gender, age (18-65+), and education level (from less than high school to Doctoral degree), are relatively uniform; 2) on average 650 YouTube video URLs are watched by each user in three months; 3) the first 40 videos watched by these users have different video class distribution, indicating diverse user interests. Considering that the dataset is collected over a long period, we select the first 40 YouTube videos watched by each of these 936 users as our real user personas, to evaluate \ToolX.

\vspace{-.03in}
\subsection{Data Collection}
\vspace{-.03in}
\label{subsec:dataset}
\paratitle{User Persona Construction.} We use a fresh Firefox browser based on Selenium to construct each user persona. For each sock puppet based persona, we start with a seed video and then follow the ``up next" video recommendations to generate a \textit{recommendation trail}. We play each video in a user persona for 30 seconds before playing the next video. Note that we clear any pop-up windows and skip the ads before playing the video. For each Reddit user and real user persona, since we already known the video ids in each persona, we visit these videos sequentially\footnote{Note that directly visiting the URL of each video doesn't trigger cookies from YouTube and hence no personal recommendation can happen. To address this, we first search the video id at YouTube and then click the first search result.}. Similar to constructing synthetic user personas, if there are any pop-up windows or ads, we clear them and then play the video for 30 seconds. 

\vspace{-.03in}
\paratitle{Recommended Video Collection.} After we complete the construction of each user persona, we go back to the YouTube homepage and refresh it for 50 times to collect all the recommended videos into a list. Note that we refresh the homepage multiple times since we want to collect enough homepage recommended videos to estimate the recommended video class distribution. We choose the number of refresh times as 50 since we empirically observe that it is a good tradeoff between collecting enough samples and minimizing the quantity of crawls to be performed. Because extremely popular videos are common across many users regardless of their profile, we 
remove them to underscore personalized recommendations. With this in mind, we filter out videos which appear in more than 1\% of personas' homepage recommended video lists. We also exclude YouTube videos showing in the homepage of a fresh browser.
Then, for each recommended video, we extract the associated tags (i.e. a list of keywords) from its metadata, and map each of them into one of the 154 topic-level video classes we have (note that a video may belong to multiple video classes).
Last, for each persona, we count the number of recommended videos in each class and divide it by the sum of videos in all classes to derive the recommended video class distribution of each persona.

\subsection{Training and Testing}
\label{subsec:train_test}
We discuss details about how we prepare the training and testing datasets, and use them to train and test  surrogate model, \Obfuscator and \Denoiser in Appendix \ref{appendix:dataset}-\ref{appendix:train_and_test}.

\subsection{Baselines}
\paratitle{Obfuscator.}
We compare the privacy-enhancing performance of \ToolX \Obfuscator with three baselines:

\textit{1) Rand-Obf:} At each obfuscation step, we randomly select one obfuscation video from the obfuscation video set, and the probability of selecting each obfuscation video is equal to $\frac{1}{M}$ ($M$ is the total number of obfuscation videos in the set).

\textit{2) Bias-Obf:} At each obfuscation step, we randomly select one obfuscation video from the obfuscation video set. However, the probability of selecting each obfuscation video is proportional to the reward triggered by each obfuscation video. To create such non-uniform distribution, we first use Rand-Obf to randomly select obfuscation videos and then record the reward after injecting them into non-obfuscated user personas. We repeat this experiment for 50 epochs and count the accumulative reward of each obfuscation video, normalize it by the sum of the accumulative rewards of all obfuscation videos, and use the normalized rewards as the non-uniform probability distribution.

\textit{3) PBooster-Obf:} At each obfuscation step, we select one obfuscation video from the obfuscation video set which can maximize the reward for the current step based on the greedy algorithm PBooster proposed in \cite{beigi2019protecting}.

\vspace{-.03in}
\paratitle{Denoiser.}
We compare the utility-preserving performance of the \Denoiser in \ToolX with a baseline that uses the same architecture as the surrogate model to predict $C^u$ directly from a non-obfuscated user persona $V^u$, without taking the obfuscated persona $V^o$ and the associated recommended video class distribution $C^o$ as inputs. We refer to this baseline as \textit{Surro-Den}. Ideally, if the surrogate model is a perfect replication of YouTube's recommendation system, then users could directly use it to get recommended videos based on their non-obfuscated user personas. Clearly this is unrealistic in practice since the surrogate model does not have access to the complete universe of YouTube videos which are updated constantly, and the model is merely an approximation of the actual YouTube recommendation system. 

For convenience, we denote the \ToolX \Obfuscator and \ToolX \Denoiser by \ToolX-Obf and \ToolX-Den respectively in the rest of the paper.
\section{Evaluation}
In this section, we evaluate the effectiveness of \ToolX from six perspectives: privacy, utility, overhead, stealthiness, robustness to de-obfuscation, and personalization.

\subsection{Privacy}
\label{subsec:privacy}
We first evaluate the effectiveness of \ToolX in enhancing privacy using three user persona datasets, and report the results in TABLE \ref{tab:privacy}. Note that we test \ToolX-Obf  and other obfuscator baselines against the real-world YouTube recommendation system. 

As shown in TABLE \ref{tab:privacy_1}, \ToolX-Obf can trigger 0.91 KL divergence in the recommended video class distribution after obfuscation ($P$) on sock puppet based personas, which translates into triggering 41.63\% of the maximum possible KL divergence in the recommended video class distribution ($P^{Norm}$). Compared with other baselines, \ToolX-Obf can increase $P^{Norm}$ by up to 2.01$\times$ and at least 1.33$\times$. Similarly, on Reddit user personas, \ToolX-Obf outperforms all baselines by up to 1.57$\times$ and at least 1.32$\times$, as reported in TABLE \ref{tab:privacy_2}.

Moreover, we evaluate whether the effectiveness of \ToolX in enhancing privacy can be transferred to real-world user personas. Specifically, we use the same obfuscator trained on sock puppet based personas to inject obfuscated videos into real-world user's video watch history, and then test it against YouTube. As reported in TABLE \ref{tab:privacy_3}, \ToolX-Obf can trigger 87.23\% of the maximum possible KL divergence in the recommended video class distribution ($P^{Norm}$), which outperforms all baselines against YouTube by up to 1.92$\times$ and at least 1.58$\times$ in terms of $P^{Norm}$.

\renewcommand{\arraystretch}{1.0}
\begin{table}[t]
\footnotesize
\vspace{-.05in}
\caption{Privacy evaluation results against YouTube w.r.t. $P$ and $P^{Norm}$.}
\vspace{-.1in}
\begin{subtable}{0.45\textwidth}
\centering
\begin{tabular}{p{0.6in}<{\centering}p{0.6in}<{\centering}p{0.6in}<{\centering}p{0.6in}<{\centering}p{0.8in}<{\centering}}
\hline
Obfuscator & Rand-Obf  & Bias-Obf & PBooster-Obf & \textbf{\ToolX-Obf} \\\hline
$P$                 & 0.71      & 0.70   & 0.81    & \textbf{0.91} \\
$P^{Norm}$          & 21.55\%   & 20.76\% & 31.24\%   & \textbf{41.63\%} \\\hline
\end{tabular}
\vspace{-.06in}
\caption{Using sock puppet based personas ($D^{Min}\!:\!0.49, D^{Max}\!:\!1.51$).}
\label{tab:privacy_1}
\end{subtable}
\vspace{.02in}
\begin{subtable}{0.45\textwidth}
\centering
\begin{tabular}{p{0.6in}<{\centering}p{0.6in}<{\centering}p{0.6in}<{\centering}p{0.6in}<{\centering}p{0.8in}<{\centering}}
\hline
Obfuscator & Rand-Obf  & Bias-Obf & PBooster-Obf & \textbf{\ToolX-Obf} \\\hline
$P$                 & 1.05      & 1.07  & 1.13     & \textbf{1.30} \\
$P^{Norm}$          & 48.79\%    & 50.99\%  & 57.84\%    & \textbf{76.49\%} \\\hline
\end{tabular}
\vspace{-.06in}
\caption{Using Reddit user personas ($D^{Min}\!:\!0.60, D^{Max}\!:\!1.51$).}
\label{tab:privacy_2}
\end{subtable}
\vspace{.02in}
\begin{subtable}{0.45\textwidth}
\centering
\begin{tabular}{p{0.6in}<{\centering}p{0.6in}<{\centering}p{0.6in}<{\centering}p{0.6in}<{\centering}p{0.8in}<{\centering}}
\hline
Obfuscator & Rand-Obf  & Bias-Obf & PBooster-Obf & \textbf{\ToolX-Obf} \\\hline
$P$                 & 0.98      & 1.00  & 1.05    & \textbf{1.39} \\
$P^{Norm}$          & 45.45\%   & 48.01\% & 55.34\%   & \textbf{87.23\%} \\\hline
\end{tabular}
\vspace{-.06in}
\caption{Using real-world user personas ($D^{Min}\!:\!0.53, D^{Max}\!:\!1.51$).}
\label{tab:privacy_3}
\end{subtable}
\label{tab:privacy}
\vspace{-.2in}
\end{table}

\subsection{Utility}
\label{subsec:utility}
Next, we evaluate the effectiveness of \ToolX in preserving user utility. 
TABLE \ref{tab:utility_1} reports our evaluation results  in terms of $U_{Loss}$ and $U^{Norm}_{Gain}$ using sock puppet based personas. 
Compared with Surro-Den, \ToolX-Den achieves on average 26\% better performance in terms of decreasing $U_{Loss}$ (i.e. increasing $U^{Norm}_{Gain}$). Recall that different from Surro-Den, \ToolX-Den also takes as inputs the obfuscated user persona $V^o$, and the associated recommended video class distribution $C^o$ which comes directly from the actual YouTube system. In contrast, the surrogate model is merely a ``first-order" model of the actual, quite complex YouTube system.
%
We also evaluate the effectiveness of \ToolX-Den in preserving user utility using both Reddit user personas and real-world users. As reported in TABLE \ref{tab:utility_2}-\ref{tab:utility_3}, \ToolX-Den can consistently preserve the utility well, reducing the utility loss by 93.80\% and 90.40\% respectively.

It is worth noting that the effectiveness of the denoiser in preserving utility does not depend on the effectiveness of the obfuscator in enhancing privacy. As shown in Table \ref{tab:utility_1}-\ref{tab:utility_3}, the same denoiser can achieve almost the same utility loss $U_{Loss}$ under different obfuscators, which implies the denoiser does not need to sacrifice privacy in order to preserve utility. We discuss the privacy-utility tradeoff in the next subsection.

\begin{table}[t]
\footnotesize
\vspace{-.05in}
\caption{Utility evaluation results w.r.t. $U_{Loss}$ and $U^{Norm}_{Gain}$. Note that each cell in the table reports $U_{Loss}/U^{Norm}_{Gain}$. }
\vspace{-.08in}
\begin{subtable}{0.45\textwidth}
\centering
\begin{tabular}{p{1.3in}<{\centering}p{1.0in}<{\centering}p{1.0in}<{\centering}}
\hline
\diagbox[width=0.8in]{Obfuscator}{Denoiser} & Surro-Den & \textbf{\ToolX-Den} \\\hline
Rand-Obf            & 0.60 / 50.91\%  & 0.54 / 79.09\%   \\
Bias-Obf            & 0.60 / 49.06\%  & 0.53 / 82.08\%    \\
PBooster-Obf   & 0.60 / 66.14\%  & 0.53 / 86.83\%    \\
\textbf{\ToolX-Obf}     & 0.60 / 74.59\%   & \textbf{0.53 / 90.35\%} \\\hline
\end{tabular}
\vspace{-.06in}
\caption{Using sock puppet based personas ($D^{Min}\!:\!0.49$).}
\label{tab:utility_1}
\end{subtable}

\begin{subtable}{0.45\textwidth}
\centering
\begin{tabular}{p{1.3in}<{\centering}p{1.0in}<{\centering}p{1.0in}<{\centering}}
\hline
\diagbox[width=0.8in]{Obfuscator}{Denoiser} & Surro-Den & \textbf{\ToolX-Den} \\\hline
Rand-Obf            & 0.68 / 83.26\%   & 0.64 / 91.18\% \\
Bias-Obf            & 0.68 / 83.98\%   & 0.66 / 88.96\% \\
PBooster-Obf   & 0.68 / 85.88\%  & 0.65 / 90.46\%    \\
\textbf{\ToolX-Obf}     & 0.68 / 89.32\%   & \textbf{0.65 / 93.80\%} \\\hline
\end{tabular}
\vspace{-.06in}
\caption{Using Reddit user personas ($D^{Min}\!:\!0.6$).}
\label{tab:utility_2}
\end{subtable}

\begin{subtable}{0.45\textwidth}
\centering
\begin{tabular}{p{1.3in}<{\centering}p{1.0in}<{\centering}p{1.0in}<{\centering}}
\hline
\diagbox[width=0.8in]{Obfuscator}{Denoiser} & Surro-Den & \textbf{\ToolX-Den} \\\hline
Rand-Obf            & 0.66 / 70.79\%   & 0.62 / 81.12\% \\
Bias-Obf            & 0.66 / 72.34\%   & 0.61 / 82.34\% \\
PBooster-Obf   & 0.66 / 76.99\%  & 0.61 / 85.31\%    \\
\textbf{\ToolX-Obf}     & 0.66 / 84.78\%   & \textbf{0.61 / 90.40\%} \\\hline
\end{tabular}
\vspace{-.06in}
\caption{Using real-world user personas ($D^{Min}\!:\!0.53$).}
\label{tab:utility_3}
\end{subtable}
\label{tab:utility}
\vspace{-.20in}
\end{table}

\vspace{-.03in}
\subsection{Varying the Obfuscation Budget}
\label{subsec:tradeoff}
So far, the obfuscation budget $\alpha$ is set to 0.2 in our evaluation. To evaluate how the obfuscation budget (i.e. the percentile of obfuscation videos in a user persona) can affect the performance of \ToolX, we increase the value of $\alpha$ and evaluate how the performance of \ToolX changes w.r.t. both privacy ($P^{Norm}$) and utility ($U_{Loss}$). We use sock puppet based persona dataset and consider three baselines:  Rand-Obf/\ToolX-Den (i.e. the combination of Rand-Obf and the \ToolX \Denoiser), Bias-Obf/\ToolX-Den (i.e. the combination of Bias-Obf and the \ToolX \Denoiser), and PBooster-Obf/\ToolX-Den (i.e. the combination of PBooster-Obf and the \ToolX \Denoiser).

\paratitle{Privacy-utility tradeoff.}
Figure \ref{fig:pu1} shows the privacy-utility tradeoff between $P^{Norm}$ and $U_{Loss}$ with varying $\alpha$ from $\{0.2, 0.3, 0.5\}$, where the top left region corresponds to both high privacy and utility. We observe that, with \ToolX-Den, the utility loss caused by different obfuscators can be significantly reduced without sacrificing privacy. Note that since our denoiser is designed to work after obfuscation, it does not hurt the performance of the obfuscator.
Moreover, with \ToolX-Den, the utility loss  remains almost the same as we keep increasing the obfuscation budget to get higher privacy. For example, compared with baselines without using \ToolX-Den, \ToolX can reduce the utility loss by 2.12$\times$ when $\alpha=0.5$. Note that without \ToolX-Den, the obfuscator needs to sacrifice utility (higher utility loss) to achieve higher privacy. This is a key difference between \ToolX and prior works that consider the privacy-utility tradeoff (see Section \ref{sec:related}).

\paratitle{Obfuscation budget and privacy level.} 
Recall that we use the recommended video class distribution as a proxy to a user profile, see Section \ref{subsec:goal}. To evaluate whether \ToolX can privatize a user profile to look almost random, we increase the obfuscation budget beyond 0.5 aiming to achieve a $P^{Norm}$ value as close to 100\% as possible.
As shown in Figure \ref{fig:pu2}, for $\alpha$ equal to 0.7 (i.e. 70\% of the videos in a user persona are obfuscation videos), $P^{Norm}$ will reach 92.95\%, which means the on-average (averaged over all users) divergence between the recommended video class distribution before and after obfuscation is 93\% of the on-average divergence between the recommended video class distribution of two random users. It is also worth noting that for real-world user personas, $P^{Norm}$ can get very close to 100\% with $\alpha$ merely equal to 0.5. Hence, we conclude that \ToolX can achieve meaningful privacy for practical obfuscation budgets $\alpha$.



\begin{figure}[!h]
\centering
\includegraphics[width=.75\linewidth]{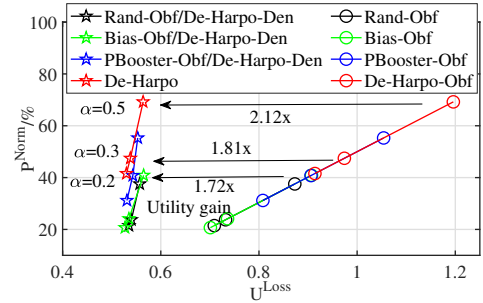}
\vspace{-.1in}
\caption{Privacy-utility tradeoff w.r.t. $P^{Norm}$ and $U_{Loss}$ under different obfuscation budget $\alpha$. Note that Rand-Obf/\ToolX-Den represents the combination of Rand-Obf obfuscator and the \ToolX \Denoiser, Bias-Obf/\ToolX-Den represents the combination of Bias-Obf obfuscator and the \ToolX \Denoiser, and PBooster-Obf/\ToolX-Den represents the combination of PBooster-Obf obfuscator and the \ToolX \Denoiser. Top left of figure represent both high privacy and high utility.}
\label{fig:pu1}
\vspace{-.15in}
\end{figure}


\vspace{-.05in}
\subsection{Overhead}
\label{subsec:overhead}
\paratitle{Obfuscation budget and overhead.} 
The larger the obfuscation budget the larger the overhead as more obfuscation videos need to be injected in the video watch history. Not surprisingly, as shown in Figure \ref{fig:pu1}, with increasing obfuscation budget $\alpha$, the privacy ($P^{Norm}$) will increase for all obfuscators. That said, \ToolX can increase privacy with less obfuscation budget than the rest. Specifically, with $\alpha=0.2$, \ToolX can achieve the same level of privacy as other baselines achieve with $\alpha=0.5$. That is, \ToolX can be as effective as baseline obfuscator in terms of enhancing privacy with 2.5$\times$ less obfuscation budget. 

\paratitle{System overhead.} We evaluate the system overhead of \ToolX in terms of CPU and memory usage and the video page load time using a an Intel i7 workstation with 64GB RAM on a campus WiFi network. 
We report that for the \Obfuscator component, the increased CPU and memory usage are less than 5\% and 2\% respectively, and the increased video page load time is less than 2\% even when $\alpha=0.5$. For the \Denoiser component, the increased CPU and memory usage are less than 28\% and 3\% respectively, and the YouTube's homepage load time is only increased by 38 millisecond. Overall, we conclude that \ToolX has a negligible impact on the user experience.
(See Appendix \ref{appendix:sys_overhead} for more detailed analysis).

\begin{table}[t]
\centering
\footnotesize
\caption{Stealthiness evaluation results under different obfuscation budget $\alpha$ with 5\% DeHarpo users. Note that we choose $\alpha$ from $\{0.2,0.3,0.5\}$ and report (Precision, Recall) of the adversarial detector for different obfuscators.}
\vspace{-.05in}
\begin{tabular}{p{1.0in}<{\centering}p{0.8in}<{\centering}p{0.8in}<{\centering}
p{0.8in}<{\centering}}
\hline
\multirow{2}{*}{Obfuscator} & \multicolumn{3}{c}{(Precision, Recall)} \\ \cline{2-4}
 & $\alpha=0.2$    & $\alpha=0.3$         & $\alpha=0.5$ \\\hline
Rand-Obf             & (4\%, 99\%)         & (5\%, 92\%)            & (5\%, 92\%) \\
Bias-Obf             & (7\%, 72\%)         & (5\%, 81\%)            & (36\%, 94\%) \\
PBooster-Obf & (19\%, 86\%) & (16\%, 93\%) & (49\%, 88\%) \\
\textbf{\ToolX-Obf}     & (67\%, 98\%)         & (73\%, 99\%)            & (74\%, 99\%) \\\hline
\end{tabular}
\label{tab:stealth}
\vspace{-.1in}
\end{table}


\begin{figure}[!t]
\begin{minipage}[t]{.23\textwidth}
    \includegraphics[width=.99\linewidth]{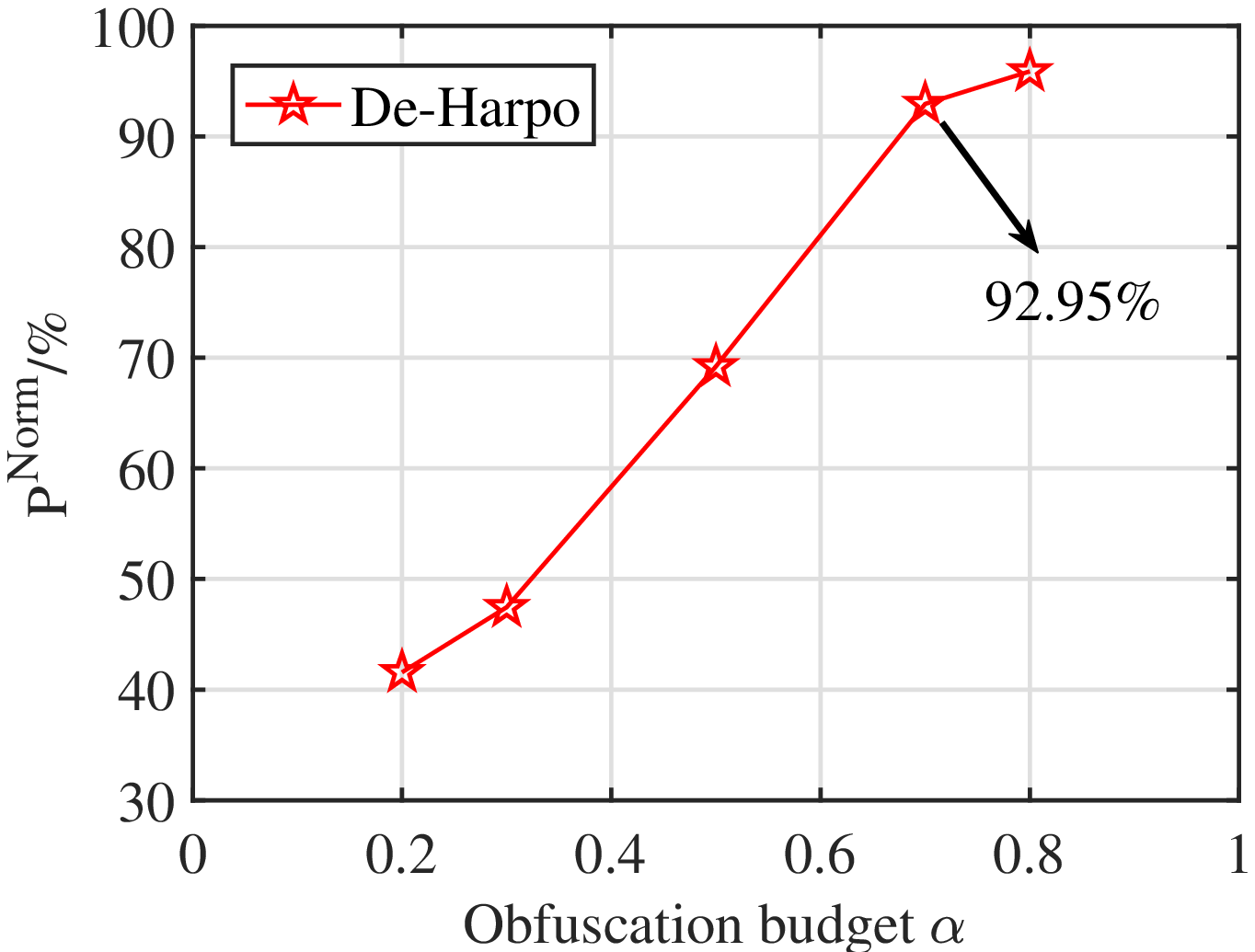}
    \caption{Privacy level $P^{Norm}$ vs obfuscation budget $\alpha$.}
    \vspace{-.1in}
    \label{fig:pu2}
\end{minipage}
\begin{minipage}[t]{.23\textwidth}
    \includegraphics[width=.99\linewidth]{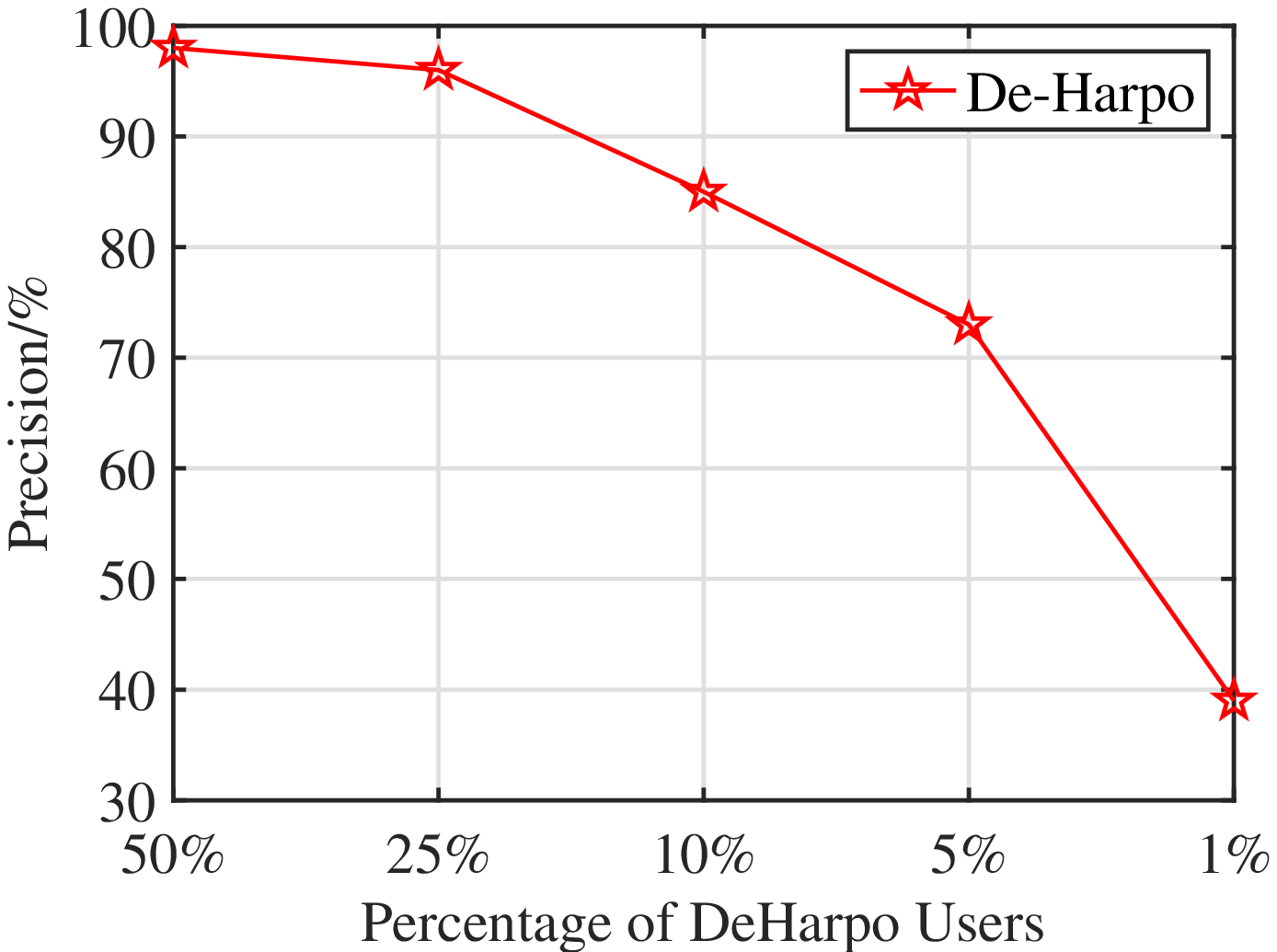}
    \caption{Precision of the adversarial detector vs the percentage of \ToolX users under $\alpha=0.5$.}
    \vspace{-.1in}
    \label{fig:stealthy}
\end{minipage}
\label{fig:}
\vspace{-.2in}
\end{figure}

\subsection{Stealthiness}
\label{subsec:stealthy}
In this subsection, we evaluate whether an adversary can train an ML classifier to accurately detect the usage of obfuscators.
We use the precision and recall of this adversarial detector to measure stealthiness of obfuscation. 
If the detector achieves high precision and recall, then it means that an obfuscator is less stealthy. 
Specifically, the input of the adversary is a user persona consisting of a sequence of videos and the binary output indicates whether or not the user persona contains at least one obfuscation video.

We train the adversarial classifier via supervised learning. 
To create the labeled dataset, we use the same set of non-obfuscated and obfuscated sock puppet based user personas used for evaluation in Section \ref{subsec:overhead} as inputs, and assign the corresponding labels to the personas (0: non-obfuscated, 1: obfuscated).
%
For each obfuscator and obfuscation budget $\alpha$, we get a balanced training dataset with 1,440 obfuscated personas and the corresponding 1,440 non-obfuscated personas. After training, we use an unbalanced dataset with 5\% obfuscation personas (a total of 360 obfuscated personas and 6,840 non-obfuscated personas) to test the detector, since only a small fraction of YouTube users are expected to employ \ToolX.

Table \ref{tab:stealth} reports the testing precision and recall of the adversarial detector under different $\alpha$ values. 
We observe that as $\alpha$ increases, both the precision and recall of the detector also increase.
This is expected as larger $\alpha$ represents more obfuscation videos, which makes it easier for the adversarial detector to distinguish obfuscated personas from non-obfuscated personas. 
%

Not surprisingly, Rand-Obf is the most stealthy obfuscator since it injects obfuscation videos randomly. \ToolX-Obf, which injects obfuscation videos that introduce new user interests to confuse YouTube, can still achieve reasonable stealthiness even when $\alpha=0.5$. 
%
Specifically, it leads to 74\% precision (36\% false positive rate) even with $\alpha=0.5$. Note the the high false positive rate presents a major obstacle in deployment of the adversarial detector due to base-rate fallacy \cite{axelsson2000base}.
We further vary the percentage of \ToolX users over all YouTube users to show how the precision of the adversarial detector changes as we go from a very unbalanced dataset to a perfectly balanced one. As shown in Figure \ref{fig:stealthy}, as the  percentage of \ToolX users varies from 1\% to 50\%, the adversarial detector's precision will increase, as expected. However, it is unlikely in practice that a large fraction of YouTube users will use obfuscation measures. And, even in the case of a balanced dataset, a 2\% false positive rate still corresponds to tens of millions of users making it prohibitively expensive to deploy it.
%
%
Essentially, the adversarial detector will have to achieve exceptionally high precision to be useful in practice.

Note that such a binary detector may be used as a first step of the detection workflow. 
Once the adversary detects the usage of \ToolX, it may further attempt to de-obfuscate the obfuscated user personas. 
That is, the adversary may attempt to identify obfuscation videos in the obfuscated user persona such that it may remove them to retrieve the non-obfuscated user personas. 
We evaluate this de-obfuscation performance of an adversary next.

\begin{table}[t]
\centering
\footnotesize
\caption{De-obfuscation robustness evaluation results under different obfuscation budget. Note that we set $\alpha\in\{0.2,0.3,0.5\}$ and report (Precision, Recall) of adversarial detector under different obfuscation approaches.}
\vspace{-.05in}
\begin{tabular}{p{1.0in}<{\centering}p{0.8in}<{\centering}p{0.8in}<{\centering}
p{0.8in}<{\centering}}
\hline
\multirow{2}{*}{Obfuscator} & \multicolumn{3}{c}{(Precision, Recall)} \\ \cline{2-4}
                    & $\alpha=0.2$    & $\alpha=0.3$         & $\alpha=0.5$ \\\hline
Rand-Obf             & (62\%, 97\%)         & (67\%, 91\%)            & (69\%, 99\%) \\
Bias-Obf             & (67\%, 89\%)         & (71\%, 89\%)            & (77\%, 92\%) \\
PBooster-Obf   & (68\%, 93\%) & (71\%, 90\%) & (77\%, 94\%) \\
\textbf{\ToolX-Obf}     & (79\%, 93\%)         & (83\%, 97\%)            & (84\%, 97\%) \\\hline
\end{tabular}
\label{tab:robust}
\vspace{-.15in}
\end{table}

\subsection{De-obfuscation Robustness}
\label{subsec:robust}
Once the adversary detects the usage of \ToolX in a user persona, it can conduct de-obfuscation. 
To evaluate whether an obfuscator is robust to de-obfuscation, we train a second adversarial detector to distinguish the obfuscation videos from the actual user videos. 
Specifically, we build a second ML classifier to detect the type of each video (user versus obfuscation video) in each sock puppet based user persona, and use its precision and recall to measure the de-obfuscation robustness. 
Smaller precision and recall represents higher de-obfuscation robustness.

We use the same set of obfuscated personas as in Section \ref{subsec:overhead} as inputs. 
For each video in an obfuscated user persona, we assign a binary label, where 0 represents it is watched by the user while 1 represents that it is injected by the obfuscator. 
The detector model takes as input the obfuscated user persona, and predicts a label for each video in the user persona. 
We use a recurrent neural network (LSTM layer) to model this adversarial detector.

As shown in Table \ref{tab:robust}, 
the precision of this adversarial detector is lower than 85\%, which means more than 15\% of the obfuscated videos identified by the adversary are false positives (they are actual user videos).
Similar to stealthiness, false positives present a bigger challenge to the adversary in deploying this detector in practice. 
Hence, we conclude that \ToolX is robust to de-obfuscation by an adversary. 

Note that while the adversary has lower precision against Rand-Obf and Bias-Obf than agaisnt \ToolX, this is because \ToolX is 2.5$\times$ more effective in preserving privacy (see Section \ref{subsec:overhead}), thus, overall, it is more privacy-preserving.

\subsection{Personalization}
\ToolX so far is trained to maximize the KL divergence in the recommended video class distribution after obfuscation, by either increasing or reducing the probability of each video class. However, a YouTube user may have a list of \textit{sensitive} video classes (e.g. health or wellness related), where they do not want the YouTube recommendations containing videos from these classes after obfuscation (i.e. reducing their probability to zero). 

Motivated by this, we design a mechanism that can treat \textit{sensitive} video classes and \textit{non-sensitive} video classes differently based on user preferences. Without loss of generality, suppose the first $L$ classes of the recommended video class distribution are \textit{non-sensitive} and the remaining $K-L$ classes are \textit{sensitive}. We then train \ToolX to maximize the following privacy metric, which aims to treat non-sensitive classes like before (maximize divergence before and after obfuscation) and eliminate sensitive class videos:

\begin{equation}
    P^{Personalized} \!\!= \!E[\underbrace{D_{KL}(C^o_{1:L},\!C^u_{1:L})}_{\mbox{\scriptsize $D_{KL}^{NonSens}$}}-\!\lambda \underbrace{D_{KL}(C^o_{L+1:K},\![\epsilon]_{L+1:K})}_{\mbox{\scriptsize $D_{KL}^{Sens}$}}],
\end{equation}
where $[\epsilon]_{L+1:K}\in\mathbb{R}^{K-L}$ indicates a 
close-to-zero vector filled with a
small positive number $\epsilon$ (e.g. $0.0001$), and $\lambda>0$ is an adjustable parameter for controlling the relative importance of $D_{KL}^{NonSens}$ versus $D_{KL}^{Sens}$. Specifically, the term $D_{KL}^{NonSens}$ aims to maximize the distance between the distribution of \textit{non-sensitive} classes before and after obfuscation, like we did before for all classes. The term $-\lambda D_{KL}^{Sens}$ aims to minimize the distance between the distribution of the sensitive classes and a distribution of very small probabilities.\footnote{Notice that we are somewhat abusing the ``distribution" term above, because we do not re-normalize the corresponding probabilities to sum up to 1, as this would (i) de-emphasize the contrast between the patterns of interest and (ii) is not required to meaningfully use the KL divergence formula.} 

Table \ref{tab:personalization} reports our evaluation results of personalized \ToolX against surrogate models, where we select 27 out of 154 video classes related to Beauty \& Wellness and Sports \& Fitness as \textit{sensitive} classes. Compared with non-personalized \ToolX, personalized \ToolX can reduce the divergence between \textit{sensitive} video class distribution and a zero vector ($D_{KL}^{Sens}$) by more than 80\%, while still triggering high divergence in \textit{non-sensitive} class distribution ($D_{KL}^{NonSens}$).

\begin{table}[!t]
\centering
\footnotesize
\caption{Personalization results. $D^{NonSens}_{KL}$ and $D^{Sens}_{KL}$ denote the divergence in \textit{non-sensitive} classes and \textit{sensitive} classes respectively.}
\vspace{-.05in}
\label{tab:personalization}
\begin{tabular}{p{1.3in}<{\centering}
p{1in}<{\centering}p{1in}<{\centering}}
\hline
                                 & $D_{KL}^{NonSen}$            & $D_{KL}^{Sen}$          \\ \hline
\ToolX                           & 1.18           & 0.26          \\
Personalized \ToolX              & 0.81 ($\downarrow 31.36\%$)           & 0.05 ($\downarrow 80.77\%$)       \\\hline
\end{tabular}
\vspace{-.1in}
\end{table}

\section{Discussion}
\label{sec: discussion}
\subsection{Ethical Considerations}
We outline the potential benefits and harms to the user and the recommendation system. 
We argue that the potential benefits of \ToolX outweigh its potential harms.

\noindent\textbf{Users.}
\ToolX provides a clear privacy benefit to its users, especially when platforms such as YouTube do not provide any meaningful control over its tracking and profiling of users.
Crucially, \ToolX is able to enhance privacy while mostly preserving the utility of personalized recommendations. 
Thus, \ToolX does not degrade user experience on YouTube. 
However, users of \ToolX potentially violate YouTube's Terms of Service (TOS) \cite{youtubeTOS} because YouTube might interpret obfuscation as ``fake engagement''. 
Therefore, if a user is signed-in to YouTube, their YouTube account might be suspended if YouTube is able to detect \ToolX's usage (though we showed that YouTube would be unable to do so without risking significant collateral damage). 
More seriously, the violation of TOS might be considered possible violation of the Computer Fraud and Abuse Act (CFAA, 18 U.S. Code § 1030) \cite{cfaa}.
However, given that \ToolX users only watch videos that they are authorized to (i.e., publicly available videos), we argue that the videos injected to the watch history for obfuscation nor the videos injected to the recommendations for repopulation exceed authorized access that could be a violation of CFAA \cite{mackey2021vanburen}.

\noindent\textbf{YouTube.}
Since \ToolX aims to preserve utility of recommendations to YouTube users, we argue that it will not directly hurt user engagement on YouTube.
\ToolX's \Obfuscator and \Denoiser would, however, contribute to additional traffic to YouTube servers and may have some indirect impact on the effectiveness of YouTube's recommendation system,
if a large enough portion of the users adopt De-Harpo. 
%
%
We note that \ToolX can be applied with satisfactory trade-off privacy vs. utility as long as only a minority of YouTube users employ obfuscation tools, which is arguably a realistic expectation. Otherwise, if a significant fraction of users adopts \ToolX, the obfuscation may lead to data poisoning, which will indirectly affect the quality of recommendations for all users. In this case, and in the absence of legal regulation of tracking and user profiling by YouTube, future research will need to explore an alternative scalable solution for privacy preservation that is complementary to obfuscation. Overall, as compared to extant privacy-enhancing obfuscation tools, we conclude that \ToolX is more favorable since it specifically aims to preserve utility and user engagement on YouTube.


\vspace{-0.05in}
\subsection{Limitations \& Future Work}
\vspace{-0.07in}
\paratitle{Side channels.} \ToolX's stealthiness can be undermined by exploiting various implementation side channels. 
For example, YouTube could use Page Visibility API \cite{URL_PAGEVISIBILITYAPI} or the Performance API \cite{URL_CHROMIUMTHROTTLE} to detect whether obfuscation videos are unusually not being played in the foreground.
However, there are patches such as wpears \cite{Anti-Visibility} to avoid detection. Additionally, the \Obfuscator plays the obfuscation videos in full in a background tab while disabling background throttling (or other such optimizations \cite{URL_CHROMIUMTHROTTLE, Non-Active-Tabs}) to prevent detection by such side channels.
As another example, the repopulation of recommendations on the homepage after denoising would entail manipulation of the HTML DOM \cite{HTMLDOM}, which might be detectable. However, such an attach would be infeasible in practice, because the detection approaches would add an overhead of up to several seconds \cite{karami2020carnus, starov2017xhound}.

\vspace{-0.07in}
\paratitle{Deployment on mobile devices.} \ToolX is currently implemented as a browser extension for desktops.
%
Since browser extensions are not supported on iOS or Android, the only option for users to benefit from \ToolX on their mobile phone is to use other Chromium based browsers that allow extensions \cite{Yandex,KiwiBrowser}. 
Another option for mobile users is to use a remote desktop utility \cite{Chrome-remote-desktop} to access YouTube with \ToolX on a desktop. 
Finally, users might still be able to reap the obfuscation benefits of \ToolX if they deploy the extension on their desktop and be logged-in to the same Google account \cite{Google-AllDevices} on both their mobile app and desktop with \ToolX. 
%

\section{Related Work}
\label{sec:related}
In this section, we discuss prior work on privacy-enhancing obfuscation in recommendation systems.

One line of prior research focuses on developing privacy-enhancing obfuscation approaches in online behavioral advertising.   
These efforts are relevant to our work because online behavioral advertising is essentially a recommendation system where the advertiser's goal is to ``recommend'' personalized ads to users based on their online activity. 
However, most of these privacy-enhancing obfuscation approaches are not designed to preserve the utility (i.e., relevance of personalized ads) \cite{howe2017engineering,URL_TRACKTHIS,meng2014your,degeling2018tracking,kim2018adbudgetkiller}, 
as they generally randomly insert a curated set of obfuscation inputs to manipulate online behavioral advertising. 

TrackThis \cite{URL_TRACKTHIS} by Mozilla injects a curated list of 100 URLs to obfuscate a user's browsing profile.
%
%
AdNauseam \cite{howe2017engineering}  clicks a random set of ads to ``confuse'' advertisers.
%
%
One subset of these efforts propose ``pollution attacks'' against online behavioral advertising that  also serve a dual role as privacy-enhancing obfuscation \cite{meng2014your,degeling2018tracking,kim2018adbudgetkiller}.
Meng et al. \cite{meng2014your} propose a pollution attack that can be launched by publishers to increase their advertising revenue by manipulating advertisers into targeting higher paying ads.
The attack involves the addition of curated URLs into a user's browsing profile. 
%
%
%
Degeling et al. \cite{degeling2018tracking} and Kim et al. \cite{kim2018adbudgetkiller} propose similar attacks but focus on two distinct stages of the online behavioral advertising pipeline: user profiling and ad targeting.
Degeling et al. \cite{degeling2018tracking} propose an obfuscation approach that involves adding URLs posted on Reddit into a user's browsing profile. 
%
%
Kim et al. \cite{kim2018adbudgetkiller} propose ``AdbudgetKiller" that involves adding a sequence of URLs into a user's browsing profile to trigger retargeted ads, which are costly for advertisers and waste their advertising budget.

Moving beyond online behavioral advertising, Xing et al \cite{xing2013take} propose pollution attacks against more general personalized recommendation systems such as YouTube, Amazon, and Google Search. 
The authors show that personalized recommendations could easily be manipulated by injecting random or curated obfuscation inputs. 
Since the attack's victim is the user, the work does not take into account the utility of recommendations to the user. 
In contrast, \ToolX is a privacy-enhancing obfuscation system that also takes into account the utility of the recommendations.

Follow up privacy-enhancing obfuscation systems do attempt to take into account the utility-privacy trade-off.
Beigi et al, \cite{beigi2019protecting} propose PBooster, a greedy search approach to obfuscate a user's browsing profile while also keeping utility in consideration.
PBooster employs topic modeling to select a subset of target topics and corresponding obfuscation URLs to add in a user's browsing history. 
Zhang et al. \cite{zhang2021harpo} propose Harpo, a reinforcement learning approach to obfuscate a user's browsing profile such that a subset of interest segments are kept while others are modified. Different from Harpo, \ToolX pairs the obfuscator with a denoiser to preserve the recommended videos related to the users’ actual interests while
removing the unrelated recommended videos caused by obfuscation. Moreover, the \ToolX obfuscator non-trivially adapts Harpo to YouTube, by building 1) a surrogate model with a different embedding model and loss function for replicating the YouTube recommendation system and 2) an obfuscator model which selects obfuscation videos based on the similarity between its embedding and the output embedding, such that it can handle an unlimited and varying number of possible obfuscation videos without requiring retraining.
Huang et al. \cite{huang2017context} propose a context-aware generative adversarial privacy (GAP) approach to train a ``privatizer'' for privacy-enhancing obfuscation against an {adversary} who attempts to infer sensitive information from input data. 
This approach is used to obfuscate mobile sensor data while navigating the privacy-utility tradeoff \cite{raval2019olympus,malekzadeh2019mobile,liu2019privacy}. While in theory we can apply GAP to jointly train the obfuscator and denoiser, in practice training them against YouTube in the wild which is prohibitively time consuming due to the iterative nature of GAP, and training them against the surrogate model is ineffective because the denoiser is able to trivially replicate the surrogate model 
(see Appendix \ref{appendix:joint}).
Beiga et al. \cite{biega2017privacy} propose a crowd-based obfuscation approach that allows individual users to preserve privacy by scrambling their browsing profiles via mediator accounts, which are selected such that the personalized recommendations to these mediator accounts are still coherent and  utility-preserving to the users behind each mediator account.
However, this approach requires a collaboration across multiple users of a recommendation system, and cannot be used by standalone users.

While recent work on privacy-enhancing obfuscation has attempted to balance the privacy-utility tradeoff, they are limited to obfuscating the input to the recommendation system to achieve this balance. 
These approaches are fundamentally limited as to how much utility can be preserved without undermining privacy by obfuscating the input to the recommendation system (see Fig. \ref{fig:budget}).
In contrast, \ToolX employs a two-step approach to this end. 
It first obfuscates the input to the recommendation system to preserve user privacy and then attempts to de-obfuscate the output recommendations to preserve utility. 
%

\section{Conclusion}
In this paper, we proposed \ToolX, a privacy-enhancing and utility-preserving obfuscation approach for YouTube's recommendation system that does not rely on cooperation from YouTube.

\ToolX used an \Obfuscator to inject obfuscation videos into a user's video watching history and a \Denoiser to remove the ``noisy'' recommended videos thus recovering the initial, unobfuscated recommendations. 
Our evaluation results demonstrated that \ToolX can reduce the utility loss by 2$\times$ for the same level of privacy compared to existing state-of-the-art obfuscation approaches.
Our work provides a template for implementing such utility-preserving obfuscation approaches on other similar online platforms, such as TikTok \cite{tiktok21} and Facebook \cite{FaceBook21}.
We will publicly release our code in conjunction with this paper to facilitate follow-up research.

\begin{acks}
The authors would like to thank Sean Hackett for his help
with the discussion of the denoiser idea, Muhammad Haroon for his help with the data collection and browser extension implementation, and Magdalena Wojcieszak for sharing the web browsing histories of real-world users.
This work is supported in part by the National Science Foundation under grant numbers 1956435, 1901488, and 2103439.
\end{acks}

\bibliographystyle{ACM-Reference-Format}
\bibliography{ref}

\appendix
\section{Why DP can not be guaranteed}
\label{appendix:dp}
\begin{theorem}
\label{theorem1}
Assume that there is one video, which the obfuscator ($O$, a randomized function) can not delete from a user persona ($P$), then we can not achieve $\epsilon$-DP or $(\epsilon,\delta)$-DP in terms of protecting the user persona.
\end{theorem}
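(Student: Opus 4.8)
The plan is to exhibit a single pair of neighboring personas on which \emph{any} obfuscator satisfying the no-deletion hypothesis necessarily violates the defining inequality of differential privacy, for every finite $\epsilon$ (and for every meaningfully small $\delta$). Recall that $\epsilon$-DP requires that for all neighboring personas $P,P'$ (differing in exactly one video) and all measurable output events $S$ we have $\Pr[O(P)\in S]\le e^{\epsilon}\Pr[O(P')\in S]$, while $(\epsilon,\delta)$-DP relaxes this to $\Pr[O(P)\in S]\le e^{\epsilon}\Pr[O(P')\in S]+\delta$. Here $O$ is the randomized mechanism and its output is an obfuscated persona; the structural fact I exploit is that $O$ only \emph{injects} videos and, by hypothesis, cannot \emph{delete} the distinguishing video $V$.

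First I would fix two neighboring personas $P_1,P_2$ that differ in exactly one video $V$, with $V\in P_1$ and $V\notin P_2$. Next I would take the adversary's distinguishing event to be $S=\{\text{personas that do not contain }V\}$. The no-deletion hypothesis yields the first probability I need: since $O$ cannot remove $V$ from $P_1$, every realization of $O(P_1)$ still contains $V$, hence $\Pr[O(P_1)\in S]=0$. For the second probability I would argue $\Pr[O(P_2)\in S]>0$: because $P_2$ does not contain $V$ and injection is governed by a budget $\alpha<1$ through a Poisson injection process, there is strictly positive probability that the injected videos avoid $V$ entirely; indeed the event that no obfuscation video is injected at all already has positive probability and leaves $O(P_2)=P_2\in S$.

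I would then substitute into the DP inequality written in the direction $\Pr[O(P_2)\in S]\le e^{\epsilon}\Pr[O(P_1)\in S]+\delta$. Using $\Pr[O(P_1)\in S]=0$, this collapses to $\Pr[O(P_2)\in S]\le\delta$, so any admissible $\delta$ must satisfy $\delta\ge\Pr[O(P_2)\in S]>0$. For pure $\epsilon$-DP we have $\delta=0$, which is contradicted for every finite $\epsilon$; hence $\epsilon$-DP is unachievable. For approximate DP the same bound forces $\delta$ to be at least a fixed positive constant (in fact close to $1$, since avoiding a single video $V$ out of hundreds of millions is overwhelmingly likely), so no \emph{meaningful} (negligible-$\delta$) $(\epsilon,\delta)$-DP guarantee can hold.

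The main obstacle I anticipate is not the inequality manipulation, which is immediate, but pinning down the measure-theoretic setup so that the argument is airtight: specifying the output space of obfuscated personas and the neighbor relation precisely, confirming that $S$ is a well-defined measurable event in that space, and justifying $\Pr[O(P_2)\in S]>0$ from the concrete injection model rather than by fiat. Once these modeling choices are fixed, the contradiction follows directly from the no-deletion hypothesis.
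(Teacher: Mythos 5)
Your proof is correct and follows essentially the same route as the paper's: fix two neighboring personas differing in a single video $V$, take the event ``output personas not containing $V$,'' and use the no-deletion hypothesis to force one probability to $0$ while the other is positive, which makes the DP ratio infinite (resp.\ forces $\delta$ to be a non-negligible constant). If anything you are slightly more careful than the paper, which asserts the complementary probability is exactly $1$ rather than merely positive; your observation that positivity suffices (e.g.\ via the event that no video is injected) is the cleaner justification.
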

\begin{proof}
\label{p1}
First, to achieve $\epsilon$-DP, for any two user personas $P_1$ and $P_2$ differing from one video, and for any user persona set $\mathcal{P}$ belonging to the output space of the obfuscator, $\frac{Pr(O(P_1)\in\mathcal{P})}{Pr(O(P_2)\in\mathcal{P})}\leq e^{\epsilon}$ should be satisfied. Now, assume that that there is one video $V$ which only exists in $P_2$ but not in $P_1$, and the obfuscator $O$ can not remove it from $P_2$ after obfuscation, which means $O(P_2)$ will always contain video $V$. Then, there exists an user persona set $\mathcal{P}$ which contains user personas without video $V$, where $Pr(O(P_1)\in\mathcal{P}) = 1$ but $Pr(O(P_2)\in\mathcal{P}) = 0$. Therefore, $\frac{Pr(O(P_1)\in\mathcal{P})}{Pr(O(P_2)\in\mathcal{P})}=+\infty$ and hence $\epsilon$ will be infinite in order to bound this worst case.

Second, to achieve $(\epsilon,\delta)$-DP, for any two user personas $P_1$ and $P_2$ differing from one video, and for any user persona set $\mathcal{P}$ belonging to the output space of the obfuscator, $|Pr(O(P_1)\in\mathcal{P}) - e^{\epsilon}Pr(O(P_2)\in\mathcal{P})|\leq \delta $ should be satisfied. Moreover, for $\delta$ to be meaningful, it has to be inversely proportional to the size of the dataset, which in our case is enormous (all possible user personas). However, since there exists an user persona set $\mathcal{P}$ without video $V$, where $Pr(O(P_1)\in\mathcal{P}) = 1$ but $Pr(O(P_2)\in\mathcal{P}) = 0$, the value of $\delta$ equals 1, which is meaningless in terms of $(\epsilon,\delta)$-DP.
\end{proof}

\begin{theorem}
Assume that there is one interest category, which the obfuscator ($O$, a randomized function) can not remove from a user profile (i.e. a list of interest categories) created by YouTube ($R$), then we can not achieve $\epsilon$-DP or $(\epsilon,\delta)$-DP in terms of protecting the user profiles.
\end{theorem}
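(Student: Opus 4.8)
The plan is to reproduce the argument of Theorem~\ref{theorem1} essentially verbatim, but lifted from the level of individual videos in a persona to the level of interest categories in a profile. First I would fix the randomized mechanism whose output the adversary actually observes: YouTube's (possibly randomized) profiling function $R$ composed with the obfuscator, so that the observed obfuscated profile is produced from the user's persona and is itself a list of interest categories. Differential privacy ``for protecting the user profile'' then constrains how much this observed-profile distribution may change when the underlying profile gains or loses a single interest category.

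Next I would instantiate the hypothesis. Let $R_1$ and $R_2$ be two profiles differing in exactly one interest category $c$, present in $R_2$ and absent in $R_1$. The assumption that the obfuscator cannot remove $c$ means that whenever the underlying profile contains $c$, every realization of the obfuscated profile also contains $c$; that is, $\Pr(c \in \text{output} \mid R_2) = 1$ deterministically. This is the exact analogue of the statement ``$O(P_2)$ will always contain video $V$'' in the proof of Theorem~\ref{theorem1}, and it is the only place where the special structure of the problem enters.

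The remainder is then identical to Theorem~\ref{theorem1}. Taking $\mathcal{R}$ to be the set of all profiles that do \emph{not} contain $c$, I would observe $\Pr(\text{output}\in\mathcal{R}\mid R_1)=1$ while $\Pr(\text{output}\in\mathcal{R}\mid R_2)=0$. For $\epsilon$-DP the ratio $\Pr(\text{output}\in\mathcal{R}\mid R_1)/\Pr(\text{output}\in\mathcal{R}\mid R_2)$ is $+\infty$, forcing $\epsilon=\infty$; for $(\epsilon,\delta)$-DP the additive slack $|\Pr(\text{output}\in\mathcal{R}\mid R_1)-e^{\epsilon}\Pr(\text{output}\in\mathcal{R}\mid R_2)|$ equals $1$, whereas for $\delta$ to be meaningful it must be inversely proportional to the enormous space of all possible profiles, so $\delta=1$ is vacuous.

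The main obstacle is conceptual rather than computational: because the obfuscator acts on personas while the protected object is the profile produced by $R$, I must argue that the ``cannot remove category $c$'' hypothesis genuinely yields a deterministic, probability-one presence of $c$ in the observed output, so that the zero-probability event survives the extra randomness introduced by $R$. Once that reduction is in place, the ratio and additive-slack bookkeeping is routine and mirrors the proof of Theorem~\ref{theorem1} line for line.
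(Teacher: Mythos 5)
Your proposal matches the paper's own proof essentially line for line: the mechanism is the composition $R \circ O$, the neighboring inputs are two profiles differing in one interest category $I$, the witness set $\mathcal{R}$ is the set of profiles not containing $I$, and the same probability-one/probability-zero dichotomy forces $\epsilon = \infty$ and $\delta = 1$. Your extra care in noting that the ``cannot remove'' hypothesis must survive the randomness of $R$ (since $O$ acts on personas while the protected object is the profile) is a point the paper glosses over, but it does not change the argument.
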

\begin{proof}
Define the YouTube recommendation system as $R$. First, to achieve $\epsilon$-DP, for any two user profiles $R(P_1)$ and $R(P_2)$ differing from one interest category, and for any user profile set $\mathcal{R}$ in the output space of recommendation system, $\frac{Pr(R(O(P_1)\in\mathcal{R})}{Pr(R(O(P_2)\in\mathcal{R})}\leq e^{\epsilon}$ should be satisfied.
Now, assume that there is one interest category $I$ which is only in user profile $R(P_2)$ but not in user profile $R(P_1)$, and the obfuscator $O$ can not remove it from user profile $R(O(P_2))$, which means user profile $R(O(P_2))$ will always contain interest category $I$. Then, there exists a user profile set $\mathcal{R}$ containing user profiles without interest category $I$, where $Pr(R(O(P_1))\in\mathcal{R})=1$ but $Pr(R(O(P_2)\in\mathcal{R}) = 0$. Therefore, $\frac{Pr(R(O(P_1)\in\mathcal{R})}{Pr(R(O(P_2)\in\mathcal{R})}=+\infty$ and hence $\epsilon$ will be infinite in order to bound this worst case.

Second, to achieve $(\epsilon,\delta)$-DP, for any two user profiles $R(P_1)$ and $R(P_2)$ differing from one interest category, and for any user profile $R(P)$ in the output space of recommendation system, $|Pr(R(O(P_1))=R(P)) - e^{\epsilon}Pr(R(O(P_2))=R(P))|\leq \delta $ should be satisfied. However, since there exists a user profile set $\mathcal{R}$ containing user profiles without interest category $I$, where $Pr(R(O(P_1)\in\mathcal{R})=1$ but $Pr(R(O(P_2)\in\mathcal{R}) = 0$, the value of $\delta$ equals 1, which is meaningless in terms of $(\epsilon,\delta)$-DP.
\end{proof}

\section{System Design Details}

\subsection{MDP}
\label{appendix:mdp}
The obfuscation video selection process of \Obfuscator can be formulated as a Markov Decision Process (MDP) defined as follows:

\textit{1) Obfuscation step:} As shown in Figure \ref{fig:mdp}, at the beginning of each time step, a video will be played. If the played video is an obfuscation video injected by the \Obfuscator, we refer to this time step as an obfuscation step. We denote the number of videos that have been played up to obfuscation step $t$ by $n_t$. Note that we use the obfuscation budget $\alpha$ as a system parameter to control the percentile of obfuscation videos. At each time step, with probability $\alpha$, an obfuscation video will be injected by \Obfuscator into the user persona.

\textit{2) State:} We define state $s_t\in\mathcal{S}$ at obfuscation step $t$ as $s_t=[v_1,...,v_{n_t}]$, where $n_t$ is the total number of videos played until the beginning of obfuscation step $t$, and $\mathcal{S}$ is the state space of the MDP.

\textit{3) Action:} At obfuscation step $t$, an action $a_t$ will be taken by the MDP. We define action $a_t\in\mathcal{A}$ as the obfuscation video selected by the MDP policy, where $\mathcal{A}$ is the action space of the MDP, i.e. the obfuscation videos set in our application.

\textit{4) State Transition:} We define the state transition function as $\mathcal{T}(\cdot|\mathcal{S},\mathcal{A}):\mathcal{S}\times\mathcal{A}\times\mathcal{S}\rightarrow\mathbb{R}$, which outputs the probability of $s_{t+1}=s'$ given $s_t=s$ and $a_t=a$ as $\mathcal{T}(s_{t+1}=s'|s_t=s,a_t=a)$. In our system, state $s_{t+1}$ contains all videos played until state $s_t$, the action $a_t$ (i.e. the obfuscation videos selected at obfuscation step $t$), and all the videos played by users between obfuscation step $t$ and obfuscation step $t+1$. Note that the randomness of this MDP comes from the random injection of obfuscation videos.

\textit{5) Reward:} We associate a reward $r_t$ for the action $a_t$ at obfuscation step $t$. Specifically, we define $r_t$ as the difference of the privacy metric $P$ (see Eq. (\ref{eq:priavcy})) between this obfuscation step and the previous one, i.e., $r_t=P_t - P_{t-1}$, where $P_t$ represents the privacy metric value at obfuscation step $t$, calculated based on the recommended video class distributions of a non-obfuscated user persona and the corresponding obfuscated user persona at the end of obfuscation step $t$.

\textit{6) Policy:} The policy of the MDP can be defined as $\mathcal{\pi}(\cdot|\mathcal{S}): \mathcal{S} \times \mathcal{A} \rightarrow \mathbb{R}$, which outputs the probability of $a_t=a$ given $s_t=s$ as $\pi(a_t=a|s_t=s)$. In our system, the \Obfuscator is modeled as the policy in MDP, which outputs the probability distribution of obfuscation video selection. Suppose we have $M$ obfuscation videos in the obfuscation video set ($\mathcal{A}$), then we have $\sum_{i=1}^{i=M}\pi(a_t=i|s_t)=1$, where $a_t=i$ represents the selection of $i$-th obfuscation video. At each obfuscation step $t$, 
we randomly choose one obfuscation video based on a multinomial distribution parameterized by $A_t=[\pi(a_t=1|s_t),\cdots,\pi(a_t=M|s_t)]$, conditioning on the current state $s_t$. The goal of solving this MDP is to find the optimal policy, such that the accumulative rewards $\sum_{t=1}^{t=T}r_t$ can be maximized. Note that $T$ is the total number of obfuscation steps since we consider a finite-horizon MDP.
\begin{figure}[!h]
    \centering
    \includegraphics[width=.82\linewidth]{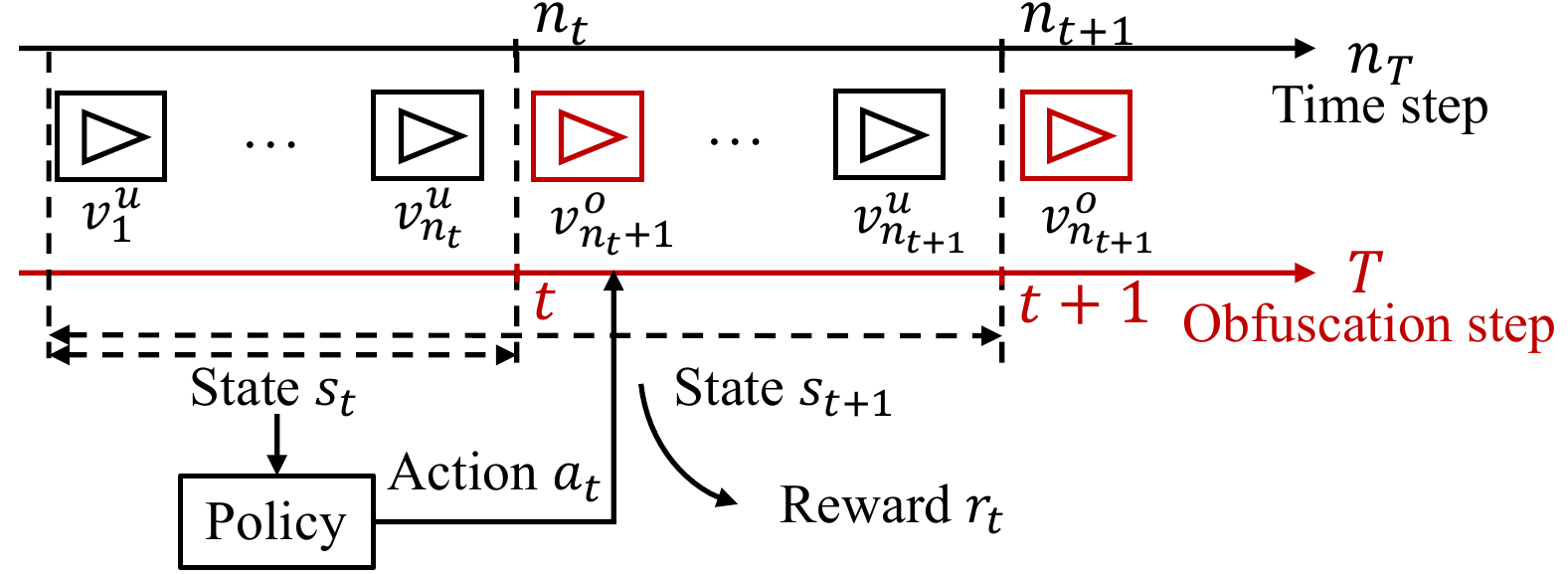}
    \vspace{-.1in}
    \caption{MDP for the \Obfuscator.}
    \vspace{-.15in}
    \label{fig:mdp}
\end{figure}

Note that the state $s_t$ (i.e. video sequence) will be continuously updated by appending new videos and is only growing unless users manually delete the history. \ToolX is designed to take the whole state $s_t$ as input of its \Obfuscator to select an obfuscation video, and then run denoising at each step. Hence, the calculation made by \ToolX at each step will depend on the calculation made by
\ToolX in the previous step, which is consistent with how YouTube works.

Moreover, we clarify that Harpo \cite{zhang2021harpo} and De-Harpo use a similar MDP formulation but with a different state (video sequence instead of webpage sequence) and reward function (privacy metric). They apply the same RL algorithm (A2C) to train the obfuscator, though the implementation differs due to MDP differences. 

\begin{figure}[!t]
\centering
\begin{subfigure}{.22\textwidth}
\centering
    \includegraphics[width=.99\linewidth]{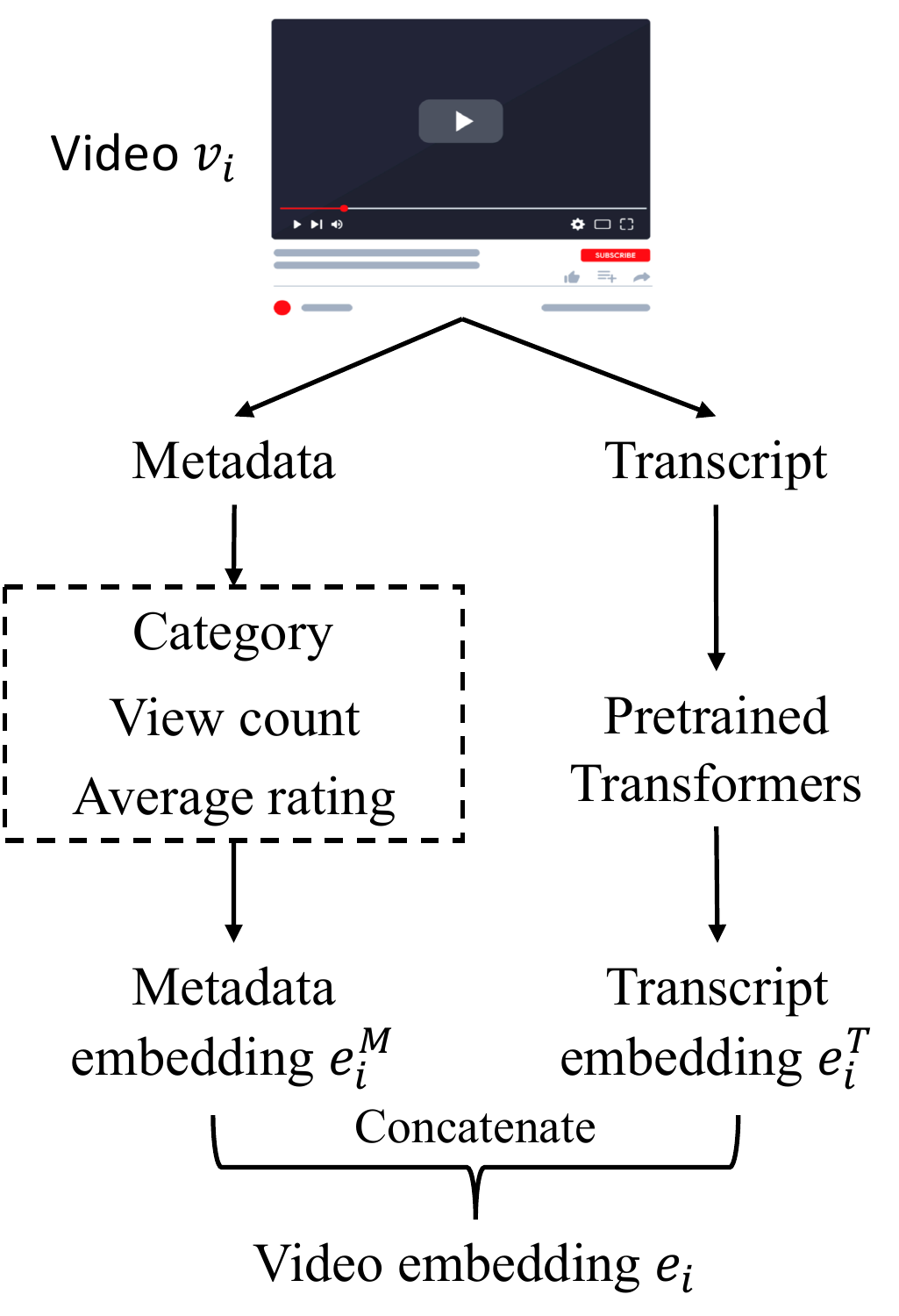}
    \caption{Video embedding}
    \label{fig:stru1}
\end{subfigure}
\begin{subfigure}{.22\textwidth}
\centering
    \includegraphics[width=.99\linewidth]{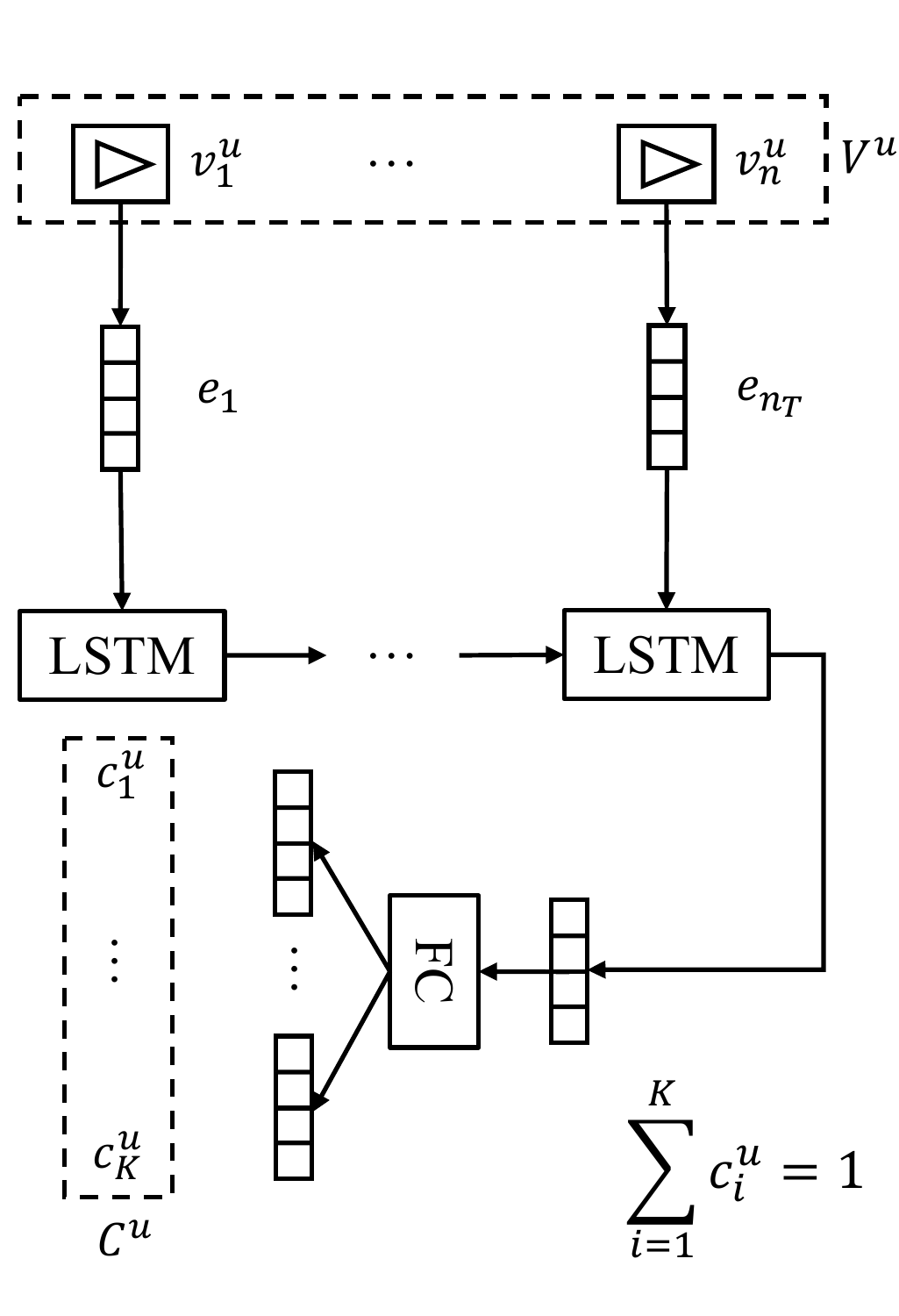}
    \caption{Surrogate model}
    \label{fig:stru4}
\end{subfigure}
\vspace{-.15in}
\caption{Details of system design. } 
\vspace{-.1in}
\label{fig:budget}
\end{figure}

\subsection{Video Embedding}
\label{appendix:video_emb}
As demonstrated in Figure \ref{fig:stru1}, we start by extracting the category, view count and average rating of each video from its metadata. We then use an one-hot embedding to represent the category of each video (with dimension 18)\footnote{Note that YouTube has 17 video categories, and we add an additional ``none'' category for videos without category metadata. Hence, the one-hot-embedding for category information has a dimension of 18.}, and use two real numbers to represent the standardized view count and average rating of each video. By combining them together, we derive the metadata embedding with 20 elements. We denote the metadata embedding for video $v_i$ as $e_i^M\in\mathbb{R}^{20}$.  

Next, we use a pretrained natural language processing (NLP) Transformer from \cite{URL_SENTENCE_TRANSFORMER} to generate the transcript embedding for the video transcript. Since the pretrained NLP Transformer has a constraint on the maximal number of words in the input text (256 words in our case), we firstly split video transcript with more than 256 words into multiple transcript chunks, each of which contains 256 words. Then, for each transcript chunk, we use it as input of the NLP model and get the output embedding vector. We take the average of these embedding vectors for these transcript chunks to derive the final transcript embedding. We denote the transcript embedding for video $v_i$ as $e_i^T\in\mathbb{R}^{384}$, which is a real vector with dimension 384. Note that if a video does not contain any transcript (e.g. music videos), we use the video title and description as an alternative of transcript to generate the transcript embedding. 
Last, we concatenate the metadata and transcript embeddings and derive the complete video embedding vector $e_i=[e_i^M,e_i^T]\in\mathbb{R}^{404}$.

\subsection{YouTube Surrogate Model}
\label{appendix:surro_model}
Prior approaches to learn latent user-item relationships for recommendation systems (e.g., matrix factorization \cite{mnih2007probabilistic,shan2010generalized,ma2008sorec,jamali2010matrix,ma2011recommender,tang2013exploiting,tang2016recommendations,yang2016social}, neural MF \cite{he2017neural,he2018outer,chen2020efficient,deng2016deep,wang2017item,zhao2017social,fan2018deep,fan2019deep}) are not scalable because they rely on a fixed set of users and items.
To address this limitation, recent work has focused on embedding based recommendation systems that predict the next item clicked by users from their item-click history and thus can scale to a large and dynamic set of users and items \cite{covington2016deep, chen2019top}.
YouTube, deals with a large influx of videos and users everyday \cite{URL_YOUTUBE_BLOG}
and thus uses a scalable recommendation system that predicts the next watched videos based on the embeddings of the past watched videos and other factors \cite{covington2016deep}.
Similar to YouTube's embedding based recommendation architecture, our surrogate model also takes as input the video embeddings. 
Slightly different from YouTube's embedding based recommendation architecture and as explained in Section \ref{subsubsec:rec_video_dist}, our surrogate model is designed to predict the recommended video class distribution, instead of making video-level recommendations. 

\section{Experimental setup}
\subsection{Sock Puppet Based Persona Model}
\label{appendix:sock_puppet}
Specifically, we denote this model as $G(D,T)$ parameterized by $D$, the depth of the recommendation trail, and $T$, the total number of videos in the watch history, and we define the \textit{recommendation trail} as a sequence of videos that are recommended and subsequently watched by a user starting from the given seed video.
At each step of the recommendation trail, we randomly select one ``up next" video to watch from the list of recommended videos with uniform probability.
We repeat this process until the \textit{recommendation trail} reaches the depth $D$ at which point we check if the user has watched $T$ videos.
If not, we randomly select another seed video from the user's homepage and repeat the process until $T$ videos have been watched. Note that we randomly select around 20,000 popular videos from a set of popular YouTube channels as our \textit{seed videos}. For each \textit{seed video}, we randomly generate a \textit{recommendation trail} and use it as a synthetic user persona.
Note that we collect seed videos used for generating sock puppet randomly from 200 popular YouTube channels, which include videos from all YouTube video categories.

\subsection{Data Preparation}
\label{appendix:dataset}
\paratitle{Video Embedding Preparation.} 
We use \texttt{youtube-dl}, a free software for downloading YouTube videos \cite{URL_YOUTUBE_DL}, to collect metadata and transcripts of videos. For metadata, we extract the category, average rating, view count, title, and description of each video, which is then used to generate the metadata embedding of each video. For a transcript, after we download it, we extract the transcript text, split it into text chunks with 256 words each, and use the pretrained Transformer \texttt{all-MiniLM-L6-v2} 
from \cite{URL_SENTENCE_TRANSFORMER} to convert them into transcript embeddings.  As described in Section \ref{subsubsec:video_emb}, we combine the metadata and transcript embeddings to generate the final video embedding.

\paratitle{User Persona Dataset Collected for Surrogate Model.}
We construct 10,000 sock puppet based personas and 10,000 Reddit user personas with 40 videos each.
For each of these personas, we collect the YouTube homepage recommended videos and derive the recommended video class distribution. We use these constructed personas as inputs ($V^u$) and the associated recommended video class distributions as labels ($C^u$) to build the dataset for surrogate model training and testing. As discussed in Section \ref{appendix:train_and_test}, we use supervised learning to train the surrogate model. 

\paratitle{User Persona Dataset Collected for Obfuscator and Denoiser.}
To evaluate the effectiveness of the \Obfuscator model against the real-world YouTube recommendation system, we need to construct both non-obfuscated and obfuscated user personas. Specifically, for each \Obfuscator model under an obfuscation budget $\alpha$, we first construct 2,936 non-obfuscated user personas \footnote{Note that 2,936 non-obfuscated user personas consist of 1,000 sock puppet based personas, 1,000 Reddit user personas, and the 936 real user personas from real-world users.}  with 40 videos each and the corresponding 2,936 obfuscated user personas generated by the \Obfuscator with on average $40*\frac{\alpha}{1-\alpha}$ videos each. Then for each pair of non-obfuscated and obfuscated user persona ($V^u$ and $V^o$), we collect their associated recommended videos from the YouTube homepage and derive their recommended video class distribution ($C^u$ and $C^o$).

Moreover, we use the same user persona data collected for the \Obfuscator evaluation to create the dataset for the \Denoiser training and testing (see Section \ref{appendix:train_and_test}). Specifically, each input of this dataset consists of one non-obfuscated user personas ($V^u$), the corresponding obfuscated user persona generated by the \Obfuscator ($V^o$), and its associated recommended video class distribution ($C^o$). Each label of this dataset is the recommended video class distribution of the non-obfuscated user persona ($C^u$).

\paratitle{Obfuscation Video Set.} 
We create our obfuscation video set by combining played videos during persona construction and videos appearing in homepage recommendations of all personas. In total, we collect approximately one million YouTube videos and use them as the obfuscation video set.
Note that the \Obfuscator will select one obfuscation video from the obfuscation video set at each obfuscation step.

\subsection{Ethical Issues Related to Reddit User and Real-world User Personas}

For the Reddit dataset, it is deemed exempt by IRB, and the dataset is publicly available and pre-crawled at https://files.pushshift.io/reddit/. We will de-identify usernames before public data release. For the YouTube users’ dataset we obtained an IRB approval and conducted experiments along the Menlo Report guidelines \cite{kenneally2012menlo}: Users consented to their data being collected for research purposes. We will not publicly release the dataset. 

\section{Training and Testing}
\label{appendix:train_and_test}

\paratitle{Surrogate Model.}
We split the user persona dataset collected for the surrogate model in 80\% for training and 20\% for testing. We use stochastic gradient descent for the surrogate model to minimize its loss, which is defined as the KL divergence between its output distribution and the actual recommended video category distribution of input user persona. We train our surrogate model for 50 epochs, where all the training samples are used once at each training epoch. We report that the average loss of our surrogate model on the testing dataset is 0.55.

\paratitle{Obfuscator.}
Recall that the \Obfuscator needs to take as input the non-obfuscated user personas. We use the training and testing user personas in the dataset collected for the surrogate model as the non-obfuscated user personas, and train the \Obfuscator to generate obfuscated user personas that maximize privacy (see Section \ref{subsec:sys_model}). Specifically, we train the \Obfuscator against the surrogate model for 50 epochs, where all the training user personas are used once at each epoch. After that, we use the testing user personas to evaluate the \Obfuscator against both the surrogate model and the real-world YouTube recommendation system, and report the average privacy metrics ($P$ and $P^{Norm}$). Note that to evaluate the performance of the \Obfuscator against YouTube, we construct non-obfuscated and obfuscated user personas to collect real-world data from YouTube (see Section \ref{subsec:dataset}). Moreover, when training the \Obfuscator, we use the on-policy RL algorithm A2C (Advantage Actor and Critic)\cite{openai2017a2c}, which is one of the state-of-the-art on-policy RL algorithms. Note that we choose the on-policy RL algorithm since it fits our application well, where the \Obfuscator (RL agent) needs to keep interacting with the YouTube recommendation system (environment) to improve the policy in an online fashion due to the dynamics of the YouTube recommendation system.

\paratitle{Denoiser.}
As described in Section \ref{subsec:dataset}, we create a dataset with 1,800 samples to train and test the \Denoiser, where 80\% of the samples are used for training and 20\% are used for testing. Specifically, 
the \Denoiser is trained via stochastic gradient descent to minimize the KL divergence between the output of the \Denoiser $\hat C^u$, i.e. the estimated recommendation video category distribution of a non-obfuscated user persona, and the actual distribution $C^u$. We train the \Denoiser for 50 epochs, where all the training samples are used once at each training epoch. We test the \Denoiser using the remaining 20\% samples and report the average utility metrics ($U_{Gain}^{Norm}$ and $U^{Loss}$).

Note that when we test \ToolX on sock puppet based persona dataset, we use the models of \ToolX trained on sock puppets dataset; when we test \ToolX on Reddit user persona dataset, we use the models of \ToolX trained on sock puppets dataset; when we test \ToolX on Real-world YouTube user dataset, we use the models of \ToolX trained on sock puppets dataset.

\section{System Overhead Analysis}
\label{appendix:sys_overhead}
We evaluate the system overhead of \ToolX in terms of CPU and memory usage and the video page load time using a an Intel i7 workstation with 64GB RAM on a campus WiFi network.
As described in Section \ref{subsubsec:sys_imp}, \ToolX consists of an \Obfuscator component that always runs in the background and a \Denoiser component that only runs when the user visits the YouTube homepage. We separately report their overhead below.

\textit{1) Obfuscator:}  We select an obfuscation budget $\alpha$ from $\{$0.0, 0.2, 0.3, 0.5$\}$, where $\alpha=0.0$ is used as the baseline (i.e. no obfuscation videos). 
For each obfuscation budget $\alpha$ we construct 10 user personas with 15 user videos each, and the browser extension visits $15 \cdot \alpha$ obfuscation videos in the background. 
We find that the increased CPU usage is less than 5\% and the increased memory usage is less than 2\%, even for obfuscation budget $\alpha=0.5$. 
Moreover, the change in video page load time of user videos is less than 2\% as $\alpha$ increases. 
Hence, we conclude that the \Obfuscator component in \ToolX has a negligible impact on the user experience overall.

\textit{2) Denoiser:} The YouTube's homepage load time with \ToolX is 1.79 seconds, which represents just a 37.8 millisecond increase as compared to the homepage load time without \ToolX. 
Specifically, it takes less than 24.6 millisecond to get the ``noisy" recommended videos from the homepage, 13.0 millisecond for the \textit{denoising} module to get ``clean" recommended videos, and 0.2 millisecond for showing these videos in the homepage.
In terms of the CPU and memory usage, the \Denoiser of \ToolX will increase them by 27.1\% and 2.2\% respectively, which is mainly due to running the ML model in the \textit{denoising} module.
Note that the increase of the CPU usage (from 12.9\% to 40.0\%) lasts for just 13 milliseconds while the ML model runs and
returns to the normal level right after that.
It is worth noting that the aforementioned measurements are conducted for the \textit{live} version of \ToolX.
In practice, we can reduce the overhead even further by implementing a \textit{cached} version of \ToolX, which caches the YouTube homepage periodically in the background and simply shows the cached homepage when the user navigates to the YouTube homepage. 
Hence, we conclude that the \Denoiser component in \ToolX has a negligible impact on the user experience overall.

\section{Discussion of Joint Training of Obfuscator and Denoiser}
\label{appendix:joint}
The \Obfuscator and \Denoiser in \ToolX are separately trained and their joint training might be much more effective. 
We experimented with jointly training the \Obfuscator and the \Denoiser using multi-objective reinforcement learning. 
Specifically, we started by training a \Denoiser model. 
Then, we trained the \Obfuscator to maximize the privacy against the surrogate model, while minimizing the loss of the \Denoiser with obfuscated user personas as inputs. 
After we trained the \Obfuscator, we retrained the \Denoiser and repeat the above process until both the \Obfuscator and the \Denoiser converge. 
We found that jointly training did not improve privacy or utility because of our use of the surrogate model, instead of YouTube in the wild, for practical reasons. 
When trained against the surrogate model, \Denoiser was able to trivially replicate the surrogate model. 
While in theory we could jointly train the \Obfuscator and the \Denoiser in the wild to avoid this issue, it would not be practical due to its time consuming nature. 
Future work can look into hybrid surrogate and in the wild joint training of  \Obfuscator and \Denoiser.

\end{document}